\newcommand\version{August 24, 2022}
\newtheorem{theorem}{Theorem}[section]
\newtheorem{lemma}[theorem]{Lemma}
\theoremstyle{definition}
\newtheorem{example}[theorem]{Example}
\theoremstyle{remark}
\numberwithin{equation}{section}
\renewcommand{\epsilon}{\varepsilon}
\DeclareMathOperator{\dist}{dist}
\DeclareMathOperator{\im}{Im}
\DeclareMathOperator{\Tr}{Tr}
\begin{document}

\title[Perturbation Determinant--- \version]{Perturbation Determinant and levinson's formula for Schr\"odinger operators with generalized Point interaction}

\author{M. Fazeel Anwar, Muhammad Usman and Muhammad Danish Zia}
\address{M. Fazeel Anwar, Department of Mathematics, Sukkur IBA University, Sukkur, Pakistan}
\email{fazeel.anwar@iba-suk.edu.pk}
\address{Muhammad Usman, Department of Mathematics, Lahore University of Management Sciences (LUMS), Lahore, Pakistan}
\email{usman@lums.edu.pk}
\address{Muhammad Danish Zia, Department of Basic Sciences, School of Civil Engineering, National University of Sciences and Technology (NUST), Islamabad, Pakistan}
\email{dazia@mce.nust.edu.pk}

\keywords{Schr\"odinger operators; Generalized point interaction; Trace formula; Perturbation determinant; Spectral shift function; Levinson's formula}

\subjclass[2010]{Primary: 34L25, 34L40; Secondary: 35P25, 81Q10}
\begin{abstract}
We consider the one dimensional Schr\"odinger operator with properly connecting generalized point interaction at the origin. We derive a trace formula for trace of difference of resolvents of perturbed and unperturbed Schr\"odinger operators in terms of a Wronskian which results into an explicit expression for perturbation determinant. Using the estimate for large time real argument  on the trace norm of the resolvent difference of the perturbed and unperturbed Schr\"odinger operators we express the spectral shift function in terms of perturbation determinant. Under certain integrability condition on the potential function, we calculate low energy asymptotics for the perturbation determinant  and prove an analog of Levinson's formula.
\end{abstract}

\maketitle

\section{Introduction and main results}

Let $H_{V}^{\mathcal{A}}$ be the Schr\"odinger operator 
\begin{equation}\label{theop}
H_{V}^{\mathcal{A}}=H^{\mathcal A}_0+V,\quad H^{\mathcal A}_0=-\frac{d^2}{dx^2}
\end{equation}
on the real-line which is realized as a union of two positive semi-axis $e_j=[0,\infty)$, $j=1,2$ coupled at $0$. The potential function $V$ is assumed to be real-valued and satisfies
\begin{equation}\label{vcond}
\int_{e_j}|V_j(x)|dx<\infty,
\end{equation}
where, $V_j$ is the restriction of $V$ on $e_j$, $j=1,2$.
The operator \eqref{theop} is self-adjoint for all functions from the space $H^2([0,\infty))\oplus H^2([0,\infty))$ satisfying the conditions
\begin{align}\label{gencond}
\begin{bmatrix}
\psi_1(0)\\
\psi_1'(0)
\end{bmatrix}
=
\mathcal{A}\begin{bmatrix}
\psi_2(0)\\
\psi_2'(0)
\end{bmatrix}
\end{align}
where $\mathcal{A}=e^{i\phi}
\begin{bmatrix}
a &b\\
c &d
\end{bmatrix}
$ and $\psi=(\psi_1,\psi_2)^T \in L^2([0,\infty))\oplus L^2([0,\infty))$ and the real parameters $\phi\in [-\pi/2,\pi/2]$, $a,b,c,d\in \mathbb{R}$ satisfying $ad-bc=-1$.\\

The above conditions are commonly known as the generalized point interaction, first introduced by \v Seba \cite{seba}. There are several other equivalent formulations for the generalized point interaction, see for example \cite{exn3} and references therein. The free Hamiltonian $H_0^{\mathcal A}$ with conditions \eqref{gencond} describes certain second order differential operators with generalized functions in coefficients. The Schr\"odinger operator with $\delta$ potential of strength $\alpha$, for instance, corresponds to the unperturbed operator $H_0^{\mathcal A}$ with $\phi=0,\, a=1,\, b=0,\, c=\alpha$ and $d=-1$. Similarly, $\delta'$ potential of strength $\beta$ corresponds to $\phi=0,\, a=-1,\,b=\beta,\,c=0$ and $d=1$.  Differential operators of this kind are closely related to exactly solvable models in quantum mechanics, atomic physics, and acoustics \cite{Alb, ynd}. 
\\

The present article is devoted to deriving explicit expression for perturbation determinant for the pair of operators $H_V^{\mathcal A}$ and $H_0^{\mathcal A}$ and its relationship with spectral shift function and Levinson's theorem. These mathematical objects play an important role in the study of direct and inverse scattering theory \cite{Har2, Har3, Har4, gre} as well as in solid state physics in connection with Friedel sum rule and excess charge \cite{koh}. Kr\u ein \cite{kre} (see also \cite{kuroda}) introduced the concept of perturbation determinant and it is an important tool in studying trace formulas of higher order.   For a detailed study on these topics for Schr\"odinger operators on the half-line as well as on the whole real line we refer to the monograph \cite{yaf} (see also reference \cite{DU}) and for a quantum star graph we refer to \cite{D1}. Our aim in this paper is to generalize these results for Schr\"odinger operators on the real-line, which can also be seen as a two edge star graph, satisfying the most general properly connecting self-adjoint matching conditions at zero. \\

 Using Kr\u ein's resolvent formula (see \cite{Alb} and \cite{exn2}), we first prove a trace formula that expresses the trace of the difference between the perturbed and corresponding unperturbed resolvents in terms of a Wronskian. This trace formula allows us to derive an explicit expression for the perturbation determinant. The perturbation determinant is an interesting object in the spectral theory of Schr\"odinger operators, as the zeros of perturbation determinant coincide with the eigenvalues of the Schr\"odinger operator and the multiplicity of each eigenvalue is equal to the order of the corresponding zero of the perturbation determinant \cite{Ak2}. Estimates on the trace norm of the resolvent difference of the operators $H_V^{\mathcal A}$ and $H_0^{\mathcal A}$  obtained in \cite{rec3} allow one to express the spectral shift function in terms of perturbation determinant. For a real argument, perturbation determinant is used to define the phase shift function. The so called Levinson's formula is derived for the Schr\"odinger operator $H_V^{\mathcal A}$ in terms of its phase shift function. This formula, also called a zero order trace formula gives a relationship between the number of negative eigenvalues and the scattering data via phase shift function. \\

To state our main results we first consider the half-line Schr\"odinger operators $H^D_{e_j}=-\frac{d^2}{dx^2}+V_j$, $j=1,2$,  with Dirichlet boundary condition at $0$ and let $H^D_V=H^D_{e_1}\oplus H^D_{e_2}$ denote the decoupled Schr\"odinger operator on the whole line. The resolvent $(H^D_V-z)^{-1}$ is denoted by $R^D_V(z)$. Under the condition \eqref{vcond} the differential equation
$$
-u_j''+V_ju=\zeta^2 u,\quad j=1,2
$$ 
has two particular solutions, namely, the regular solution and the Jost solution. The regular solution $\phi_j$ is characterized by the conditions
$$
\phi_j(0,\zeta)=0,\quad \phi'_j(0,\zeta)=1
$$
and the Jost solution $\theta_j$ by the asymptotics $\theta_j(x,\zeta)\sim e^{i\zeta x}$, as $x\rightarrow \infty$. We denote the resolvent $(H_{V}^{\mathcal{A}}-z)^{-1}$ of the perturbed operator $H_{V}^{\mathcal{A}}$ by $R^{\mathcal{A}}_V(z)$ and the resolvent $(H^{\mathcal A}_0-z)^{-1}$ of the unperturbed operator $H^{\mathcal A}_0$ by $R^{\mathcal A}_0(z)$. Moreover, the (modified) perturbation determinant  is defined as
$$
D(z):=\mbox{det}(\mathbb{I}+\sqrt{V}R_0^{\mathcal A}(z)\sqrt{|V|}),\quad z\in\rho(H_0^{\mathcal A}),
$$
where, $\sqrt{V}=\mbox{sgn} V\sqrt{|V|}$. 
 \\

Our first main result is the following trace formula for the difference of two resolvents in terms of Jost solutions $\theta_j$ and their derivatives
\begin{theorem}\label{t1}
If the potential $V$ satisfies \eqref{vcond} then the following trace formula holds
\begin{align}\label{traceformula}
\Tr(R^{\mathcal{A}}_{V}(z)-R^{\mathcal A}_{0}(z))=-\dfrac{1}{2\zeta}\left(\dfrac{d}{d\zeta}\ln\left(\dfrac{w_1(\zeta)w_2(\zeta)L(\zeta)}{(a-d)\zeta+(b\zeta^2+c)i}\right)\right), \quad \zeta=z^{1/2},\quad \emph{Im}\,\zeta>0
\end{align}
where $$L(\zeta)=a\frac{\theta_1'(0,\zeta)}{\theta_1(0,\zeta)}-d\frac{\theta_2'(0,\zeta)}{\theta_2(0,\zeta)}+b\frac{\theta_1'(0,\zeta)\theta_2'(0,\zeta)}{\theta_1(0,\zeta)\theta_2(0,\zeta)}-c$$ and $w_j(\zeta)=\theta_j(0,\zeta)$, $j=1,2$. Moreover, for $z\in\rho(H_V)$ and
  $\emph{Im}\,z^{1/2}>0,$ 
  the perturbation determinant $D(\zeta)$ of the operator $H_{V}^{\mathcal{A}}$ with respect to $H^{\mathcal A}_0$ is given by
\begin{equation}\label{PDEx}
D(z)=\dfrac{L(\sqrt{z}) w_1(\sqrt{z})w_2(\sqrt{z})}{(a-d)\sqrt{z}i-(bz+c)}.
\end{equation}
\end{theorem}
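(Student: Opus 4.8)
The plan is to combine three ingredients: the classical half-line Dirichlet trace formula, Kr\u ein's resolvent formula for a change of the matching condition at the origin, and the standard identity linking the perturbation determinant to the resolvent trace. Since $H_V^{\mathcal A}$ and $H_0^{\mathcal A}$ carry different potentials, they are not self-adjoint extensions of one symmetric operator, so Kr\u ein's formula cannot be applied to the pair directly. Instead I would route through the decoupled Dirichlet operator $H^D_V=H^D_{e_1}\oplus H^D_{e_2}$ and write
$$
R^{\mathcal A}_V(z)-R^{\mathcal A}_0(z)=\bigl(R^{\mathcal A}_V(z)-R^D_V(z)\bigr)+\bigl(R^D_V(z)-R^D_0(z)\bigr)-\bigl(R^{\mathcal A}_0(z)-R^D_0(z)\bigr),
$$
where $R^D_0(z)$ is the resolvent of the free decoupled Dirichlet operator. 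Each outer difference changes only the boundary condition at $0$ (keeping $V$, resp.\ $V\equiv0$, fixed) and is therefore accessible to Kr\u ein's formula, while the middle difference keeps the Dirichlet condition and changes only the potential.

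For the middle term I would invoke the classical half-line result on each edge: since $\phi_j(0)=0$ and $\phi_j'(0)=1$, the Jost function is $w_j(\zeta)=\theta_j(0,\zeta)=W(\theta_j,\phi_j)$, and the Dirichlet resolvent trace difference equals $-\tfrac{1}{2\zeta}\tfrac{d}{d\zeta}\ln w_j(\zeta)$, whence $\Tr(R^D_V-R^D_0)=-\tfrac{1}{2\zeta}\tfrac{d}{d\zeta}\ln(w_1w_2)$. For the two outer terms, the symmetric operator on $L^2(e_1)\oplus L^2(e_2)$ obtained by imposing all vanishing conditions at $0$ has deficiency indices $(2,2)$, with defect subspace at $z$ spanned by the Jost solutions $(\theta_1,0)$ and $(0,\theta_2)$. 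Kr\u ein's formula then writes each difference as a rank-two operator $\sum_{j,k}[Q(z)]_{jk}\langle\,\cdot\,,\theta_k(\bar z)\rangle\theta_j(z)$, whose $2\times2$ coefficient matrix $Q(z)$ is built from the boundary data $\theta_j(0,\zeta),\theta_j'(0,\zeta)$ and the entries of $\mathcal A$. Taking traces --- the cross terms between the two edges vanish, and each diagonal term is evaluated by the Wronskian identity $\int_0^\infty\theta_j(x,\zeta)^2\,dx=\tfrac{1}{2\zeta}\bigl[\theta_j\,\partial_\zeta\theta_j'-\theta_j'\,\partial_\zeta\theta_j\bigr](0,\zeta)$ --- converts the rank-two traces into logarithmic $\zeta$-derivatives. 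The determinant of the matching system $\alpha\,(\theta_1(0),\theta_1'(0))^{T}=\mathcal A\,\beta\,(\theta_2(0),\theta_2'(0))^{T}$ supplies, up to the factor $e^{i\phi}w_1w_2$, exactly $L(\zeta)$, while its $V\equiv0$ version (with $\theta_j(0)=1,\ \theta_j'(0)=i\zeta$) supplies $L_0(\zeta)=(a-d)i\zeta-(b\zeta^2+c)=i\bigl[(a-d)\zeta+(b\zeta^2+c)i\bigr]$. Hence $\Tr(R^{\mathcal A}_V-R^D_V)=-\tfrac{1}{2\zeta}\tfrac{d}{d\zeta}\ln L$ and $\Tr(R^{\mathcal A}_0-R^D_0)=-\tfrac{1}{2\zeta}\tfrac{d}{d\zeta}\ln L_0$; adding the three contributions, and noting that the constant $i$ in $L_0$ drops under $\tfrac{d}{d\zeta}\ln$, gives exactly \eqref{traceformula}.

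For the perturbation determinant I would use the identity $\Tr(R^{\mathcal A}_V(z)-R^{\mathcal A}_0(z))=-\tfrac{d}{dz}\ln D(z)=-\tfrac{1}{2\zeta}\tfrac{d}{d\zeta}\ln D(z)$, which follows from $\tfrac{d}{dz}R^{\mathcal A}_0=(R^{\mathcal A}_0)^2$, the resolvent identity, and $\tfrac{d}{dz}\ln\det=\Tr\bigl((\,\cdot\,)^{-1}\tfrac{d}{dz}(\,\cdot\,)\bigr)$ applied to $\det(\I+\sqrt V R^{\mathcal A}_0\sqrt{|V|})$. Comparing with \eqref{traceformula} shows that $D(z)$ and $L(\zeta)w_1(\zeta)w_2(\zeta)/\bigl((a-d)\zeta+(b\zeta^2+c)i\bigr)$ have the same logarithmic $\zeta$-derivative, hence agree up to a multiplicative constant $C$. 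I would fix $C$ from the normalization $D(z)\to1$ as $z\to-\infty$: as $|\zeta|\to\infty$ one has $w_j(\zeta)\to1$ and $\theta_j'(0,\zeta)/\theta_j(0,\zeta)\to i\zeta$, so $L(\zeta)\to L_0(\zeta)=i\bigl[(a-d)\zeta+(b\zeta^2+c)i\bigr]$ and the quotient tends to $i$. Thus $C=1/i$; absorbing this factor converts the denominator into $(a-d)\sqrt z\,i-(bz+c)$ and yields \eqref{PDEx}.

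I expect the main obstacle to be the Kr\u ein bookkeeping in the second paragraph: correctly identifying the coefficient matrix $Q(z)$ for the generalized interaction and verifying that the boundary terms delivered by the Wronskian identity reassemble precisely into $\tfrac{d}{d\zeta}\ln L(\zeta)$ (and its $V\equiv0$ analogue) rather than some other combination of the data $\theta_j(0),\theta_j'(0)$. The constraint $ad-bc=-1$ and the phase $e^{i\phi}$ must be tracked throughout, since they are what enforce self-adjointness and make the algebra collapse to the clean logarithmic-derivative form; a secondary technical point is justifying the trace-class property and the interchange of trace with differentiation, for which the decay of $\theta_j$ and condition \eqref{vcond} are used.
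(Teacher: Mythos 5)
Your proposal is correct and follows essentially the same route as the paper: the paper likewise passes through the decoupled Dirichlet operator $H^D_V$, applies Kr\u ein's formula to the two boundary-condition changes (its Lemma 2.1 is exactly your coefficient matrix $Q(z)$, with the determinant of the matching system producing $L(\zeta)w_1(\zeta)w_2(\zeta)$), evaluates the rank-two traces with the same Wronskian identity $\int_0^\infty\theta_j^2\,dx=\tfrac{\theta_j^2(0,\zeta)}{2\zeta}\tfrac{d}{d\zeta}\tfrac{\theta_j'(0,\zeta)}{\theta_j(0,\zeta)}$, and fixes the constant $A=1/i$ from $D(z)\to1$ at high energy just as you do. The only discrepancy is cosmetic: you use the standard sign $\Tr(R^{\mathcal A}_V-R^{\mathcal A}_0)=-\tfrac{d}{dz}\ln D(z)$, while the paper states this identity without the minus sign and then also drops the minus sign from \eqref{traceformula} when equating, so the two slips cancel and both arguments land on the same \eqref{PDEx}.
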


In Section 4 we study the behaviour of $D(z)$ as $|z|\rightarrow 0$ and prove the zero order trace formula, commonly known as the Levinson's formula, for the operator $H_{V}^{\mathcal{A}}$. We will need the following constants:
\begin{align}\label{constants}
\alpha_1 & =\left(b\left(\dfrac{\theta_1'(0,0)}{\theta_2(0,0)}+\dfrac{\theta_2'(0,0)}{\theta_1(0,0)}\right)-a\dfrac{\theta_2(0,0)}{\theta_1(0,0)}+d\dfrac{\theta_1(0,0)}{\theta_2(0,0)}\right)i,\\ \nonumber
\alpha_2 &=a\theta_1'(0,0)\theta_2(0,0)+b\theta_1'(0,0)\theta_2'(0,0),\\\nonumber
\alpha_3 &=(c\theta_2(0,0)+d\theta_2'(0,0))\dot{\theta}_1(0,0),\\\nonumber
\alpha_4 &=a\theta_1'(0,0)\dot{\theta}_2(0,0)-d\dot{\theta}_1(0,0)\theta_2'(0,0),
\end{align}
where dot denotes the derivative with respect to $\zeta$.
\begin{theorem}\label{lea}
Assume that

\begin{align}\label{Dec_Condition}
\int\limits_{e_j}(1+x)|V_j(x)|\,dx<\infty,\hspace*{0.3cm}j=1,2
\end{align}
and let $N$ be the number of negative eigenvalues of the operator $H_{V}^{\mathcal{A}}$. Then, the following formula holds
\[ \eta(\infty)-\eta(0)=\begin{cases} 
      \pi \left(N-\dfrac{P-Q}{2}\right), & w_j(0)\neq0$, $j=1,2,\\
      \pi \left(N-\dfrac{P-R}{2}\right), & w_1(0)=0,w_2(0)\neq0,\\
      \pi \left(N-\dfrac{P-S}{2}\right), & w_1(0)\neq0,w_2(0)=0,\\
      \pi \left(N-\dfrac{P-T}{2}\right), & w_j(0)=0$, $j=1,2.
   \end{cases}
\]
Here, $\eta$ is the so-called phase shift function, $P,Q,R,S$ and $T$ are functions which are defined as
\begin{align*}
P &=\begin{cases} 
      0 & c\neq0 \\
      1 & c=0,a-d\neq0\\
      2 & c=0,a-d=0,b\neq0 ,
   \end{cases}\qquad
 Q =\begin{cases} 
      0 & L(0)\neq0 \\
      1 & L(0)=0,\alpha_1\neq0\\
      2 & L(0)=0,\alpha_1=0,b\neq0 ,
   \end{cases}\\
 R &=\begin{cases} 
      0 & \alpha_2\neq0 \\
      1 & \alpha_2=0,
   \end{cases}\qquad
 S =\begin{cases} 
      0 & \alpha_3\neq0 \\
      1 & \alpha_3=0,
   \end{cases}\\
 T&=\begin{cases} 
      0 & b\neq0 \\
      1 & b=0,\alpha_4\neq0\\
      2 & b=0,\alpha_4=0,c\neq0 .
   \end{cases}
\end{align*}
\end{theorem}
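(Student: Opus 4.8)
The plan is to read Levinson's formula off from the argument principle applied to the perturbation determinant in the explicit form \eqref{PDEx}. Throughout I work in the variable $\zeta=\sqrt z$ with $\im\zeta\ge 0$ and regard $D$ as a function of $\zeta$; under the stronger moment hypothesis \eqref{Dec_Condition} the Jost data $w_j(\zeta)=\theta_j(0,\zeta)$, the derivatives $\theta_j'(0,\zeta)$, and hence $L(\zeta)$, are continuously differentiable up to the boundary, including the threshold $\zeta=0$. Recalling that $\eta$ is defined in Section~3 as (essentially) the argument of the boundary value $D(\lambda+i0)$ for $\lambda>0$, the quantity $\eta(\infty)-\eta(0)$ is precisely the total increment of $\arg D$ as $\zeta$ runs over the positive real axis. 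Since, by the spectral interpretation of the perturbation determinant recalled in the introduction, the zeros of $D$ in the open upper half-plane lie on the positive imaginary axis and are exactly the $N$ negative eigenvalues of $H_V^{\mathcal A}$, the strategy is to convert this increment of $\arg D$ into a count of those zeros.

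First I would introduce the contour $\Gamma_{\epsilon,R}$ bounding the half-annulus $\{\epsilon<|\zeta|<R,\ \im\zeta>0\}$, oriented counterclockwise, and apply the argument principle,
\[
\frac{1}{2\pi i}\oint_{\Gamma_{\epsilon,R}}\frac{d}{d\zeta}\ln D(\zeta)\,d\zeta = N,
\]
the interior poles being absent under the present normalization (a point I would verify, since the denominator of \eqref{PDEx} could in principle vanish on the imaginary axis at a bound state of $H_0^{\mathcal A}$; otherwise their number is subtracted). I would then evaluate the four arcs separately as $R\to\infty$, $\epsilon\to0$. On the large semicircle $D\to1$: from $\theta_j(0,\zeta)\to1$ and $\theta_j'(0,\zeta)\sim i\zeta$ as $|\zeta|\to\infty$ one finds $L(\zeta)\sim (a-d)i\zeta-b\zeta^2-c$, which matches the denominator of \eqref{PDEx} exactly, so numerator and denominator have identical leading asymptotics and this arc contributes nothing. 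The two segments of the real axis combine, through the Hermitian symmetry $D(-\bar\zeta)=\overline{D(\zeta)}$, into twice the increment of $\arg D$ over $(0,\infty)$, i.e. $2\bigl(\eta(\infty)-\eta(0)\bigr)$. The only remaining piece is the small semicircle about the origin, which contributes $-\pi$ times the order of $D$ at $\zeta=0$.

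The heart of the proof is therefore the low-energy expansion of $D$ at $\zeta=0$, where $\mathrm{ord}_0 D=\mathrm{ord}_0(Lw_1w_2)-\mathrm{ord}_0\bigl((a-d)i\zeta-(b\zeta^2+c)\bigr)$. The denominator is elementary: it tends to $-c$, to $(a-d)i\zeta$, or to $-b\zeta^2$ according to the successive vanishing of $c$, then $a-d$, then the survival of $b$, which is exactly the trichotomy defining $P$. For the numerator, which equals
\[
Lw_1w_2 = a\theta_1'(0,\zeta)\theta_2(0,\zeta)-d\theta_1(0,\zeta)\theta_2'(0,\zeta)+b\theta_1'(0,\zeta)\theta_2'(0,\zeta)-c\theta_1(0,\zeta)\theta_2(0,\zeta),
\]
I would split into the four cases according to whether $w_1(0)$ and $w_2(0)$ vanish, and read off the leading Taylor coefficient at $\zeta=0$; these leading (or, when they vanish, next-order) coefficients are precisely the constants $\alpha_1,\dots,\alpha_4$ of \eqref{constants}, and the corresponding orders of vanishing are $Q,R,S,T$. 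The key analytic input for the degenerate subcases is the threshold Wronskian identity
\[
\theta_j(0,0)\,\dot\theta_j'(0,0)-\theta_j'(0,0)\,\dot\theta_j(0,0)=i,
\]
obtained by observing that both $\theta_j(\cdot,0)$ and $\dot\theta_j(\cdot,0)$ solve the zero-energy equation, so their Wronskian is $x$-independent and may be evaluated from the asymptotics $\theta_j(x,0)\to1$, $\dot\theta_j(x,0)\sim ix$; in particular, when $w_j(0)=0$ it forces $w_j$ to vanish to exactly first order and pins down $\dot\theta_j(0,0)$, which is what produces the $\dot\theta_j$ factors in the $\alpha_i$.

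Assembling the four contributions gives $2\bigl(\eta(\infty)-\eta(0)\bigr)-\pi\,\mathrm{ord}_0 D=2\pi N$, hence $\eta(\infty)-\eta(0)=\pi N-\tfrac{\pi}{2}\bigl(P-(\text{numerator order})\bigr)$, which is the asserted case distinction once the numerator order is identified with $Q,R,S$, or $T$. I expect the main obstacle to be this last step: carrying out the degenerate low-energy expansions, especially the doubly resonant case $w_1(0)=w_2(0)=0$, where the leading terms cancel and one must pass to first order (producing $\alpha_4$) while simultaneously tracking the vanishing of the denominator, and then matching the resulting $\mathrm{ord}_0 D$ to the half-integer corrections. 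Secondary care is needed to justify $D\to1$ on the large arc with a decay rate fast enough to kill its argument, to confirm that $\eta$ extends continuously to the threshold, and to confirm the absence of interior poles so that the count is exactly $N$.
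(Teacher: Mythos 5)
Your proposal follows essentially the same route as the paper's proof: the argument principle for $D(\zeta)$ over the half-annulus contour $\Gamma_{R,\epsilon}$, the symmetry $\eta(-k)=-\eta(k)$ turning the real-axis pieces into $2(\eta(\infty)-\eta(0))$, the large arc killed by $D\to 1$, the small arc contributing $-\pi\,\mathrm{ord}_0 D$, and the threshold order computed case by case exactly as in the paper's low-energy asymptotics lemma (your leading Taylor coefficients and threshold Wronskian identity are precisely the constants \eqref{constants} and the identity $\dot\theta_j(0,0)\theta_j'(0,0)=-i$ the paper quotes from Demirel). The one point where you are more careful than the paper is in flagging possible poles of $D$ inside the contour at negative eigenvalues of $H_0^{\mathcal A}$, i.e. zeros of the denominator $(a-d)i\zeta-(b\zeta^2+c)$ on the positive imaginary axis (an attractive $\delta$ interaction, say), which the paper's proof passes over by asserting analyticity of $D$ from that of $w_j$ alone; this caveat is a genuine improvement, not a gap.
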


\section{Perturbation determinant}
Our main purpose in this section is to derive the trace formula \eqref{traceformula} and the explicit expression \eqref{PDEx} for the perturbation determinant of operator $H_{V}^{\mathcal{A}}$ with respect to $H_0$ in terms of the Jost solutions $\theta_j$ and their derivatives $\theta_j'$. For this purpose we use Kr\u ein's resolvent formula which allows us to describe the kernel of $R_V^{\mathcal A}(z) $ in terms of the kernel of $R^D_V(z)$. More precisely, for $\mbox{Im}\,\zeta\geq 0$, the Krein's formula states
\begin{align}\label{krein}
R^{\mathcal A}_{j,l}(x,y;z):=R_{j,l}^D(x,y;z)+\lambda_{j,l}\theta_j(x,\zeta)\theta_l(y,\zeta),\quad j,l=1,2,\quad z=\zeta^2\in\rho(H^{\mathcal A}_V)\cap \rho(H^D_V),
\end{align}
where $R^{\mathcal A}_{j,l}$ and $R^D_{j,l}$ denote the kernel of $2\times 2$ matrix integral operators $R_V^{\mathcal A}$ and $R^D_V$, respectively. The values of the coefficients $\lambda_{j,l}$ are given by the following lemma
\begin{lemma}\label{lambdavalues}
Let $\psi(x):=\int_0^\infty R^{\mathcal{A}}(x,y;z)f(y)\,dy$ and $f=(f_1(x),f_2(x))^T\in L^2([0,\infty))\oplus L^2([0,\infty))$. If $\psi(x)$ satisfies $H^{\mathcal A}_V\psi=\zeta^2\psi+f$ and conditions \eqref{gencond} then $\lambda_{j,l}$ in \eqref{krein} are given by
\begin{align*}
\begin{bmatrix}
\lambda_{1,1} &\lambda_{1,2}\\
\lambda_{2,1} &\lambda_{2,2}\\
\end{bmatrix}
=
\begin{bmatrix}
-\dfrac{a+b\dfrac{\theta_2'(0,\zeta)}{\theta_2(0,\zeta)}}{L(\zeta)\theta_1^2(0,\zeta)} &\dfrac{(ad-bc)e^{i\phi}}{L(\zeta)\theta_1(0,\zeta)\theta_2(0,\zeta)}\\
\dfrac{e^{-i\phi}}{L(\zeta)\theta_1(0,\zeta)\theta_2(0,\zeta)} &-\dfrac{b\dfrac{\theta_1'(0,\zeta)}{\theta_1(0,\zeta)}-d}{L(\zeta)\theta_2^2(0,\zeta)}\\
\end{bmatrix}
\end{align*}
where $L(\zeta)=a\frac{\theta_1'(0,\zeta)}{\theta_1(0,\zeta)}-d\frac{\theta_2'(0,\zeta)}{\theta_2(0,\zeta)}+b\frac{\theta_1'(0,\zeta)\theta_2'(0,\zeta)}{\theta_1(0,\zeta)\theta_2(0,\zeta)}-c$.
\end{lemma}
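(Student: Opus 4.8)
The plan is to exploit the hypothesis that $\psi=(\psi_1,\psi_2)^T$ is the image $R_V^{\mathcal A}(z)f$ of an arbitrary $f\in L^2([0,\infty))\oplus L^2([0,\infty))$, so it solves $(H_V^{\mathcal A}-z)\psi=f$ and satisfies the matching conditions \eqref{gencond}; the unknowns $\lambda_{j,l}$ are forced to be exactly those for which \eqref{krein} reproduces this $\psi$. Inserting \eqref{krein} into $\psi_j(x)=\sum_{l}\int_0^\infty R_{j,l}^{\mathcal A}(x,y;z)f_l(y)\,dy$ and using that the decoupled Dirichlet resolvent $R_V^D$ is block-diagonal, so that $R_{j,l}^D=0$ for $j\neq l$, I would write
\[
\psi_j(x)=\psi_j^D(x)+\theta_j(x,\zeta)\,S_j,\qquad S_j:=\sum_{l=1}^{2}\lambda_{j,l}c_l,\quad c_l:=\int_0^\infty\theta_l(y,\zeta)f_l(y)\,dy,
\]
where $\psi_j^D=\int_0^\infty R_{j,j}^D(\cdot,y;z)f_j(y)\,dy$ is the Dirichlet solution on the edge $e_j$. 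The whole problem then reduces to expressing the two scalars $S_1,S_2$ as linear functions of $c_1,c_2$: the coefficients are precisely the $\lambda_{j,l}$.

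Next I would compute the boundary data of $\psi_j$ at the vertex. Since $R_{j,j}^D$ is the Dirichlet Green's function assembled from the regular solution $\phi_j$ and the Jost solution $\theta_j$, normalized by the Wronskian $W(\phi_j,\theta_j)=-w_j(\zeta)$, one has the representation $\psi_j^D(x)=w_j^{-1}\bigl[\theta_j(x)\int_0^x\phi_j(y)f_j(y)\,dy+\phi_j(x)\int_x^\infty\theta_j(y)f_j(y)\,dy\bigr]$. Evaluating at $x=0$ and using $\phi_j(0)=0$ and $\phi_j'(0)=1$ yields the two key identities $\psi_j^D(0)=0$ and $(\psi_j^D)'(0)=c_j/w_j$. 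Hence, with $w_j=\theta_j(0,\zeta)$, the four boundary values are $\psi_j(0)=w_j S_j$ and $\psi_j'(0)=c_j/w_j+\theta_j'(0,\zeta)S_j$.

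Substituting these into \eqref{gencond} produces a $2\times2$ linear system $M(S_1,S_2)^T=(r_1,r_2)^T$ whose right-hand side is linear in $c_1,c_2$ and carries the phase $e^{i\phi}$. A direct computation of $\det M$, after clearing the common factor $w_1w_2$, reproduces exactly the combination defining $L(\zeta)$, so that $\det M=e^{i\phi}w_1(\zeta)w_2(\zeta)L(\zeta)$; this identification is the heart of the argument and is what ties the coefficients to $L$. Solving by Cramer's rule and simplifying with the relation $ad-bc=-1$ then gives $S_1,S_2$ as explicit linear combinations of $c_1,c_2$. Because $f$, hence the pair $(c_1,c_2)$, is arbitrary, the four coefficients $\lambda_{j,l}$ can be read off as the matrix of this map, producing the stated matrix.

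The main obstacle I anticipate is not conceptual but the careful bookkeeping of signs and of the phase $e^{\pm i\phi}$. In particular the normalization of the Dirichlet Green's function, equivalently the value $(\psi_j^D)'(0)=c_j/w_j$ coming from $W(\phi_j,\theta_j)=-w_j$, must be pinned down exactly, since any error there propagates to the off-diagonal entries $\lambda_{1,2},\lambda_{2,1}$ where the factor $ad-bc=-1$ enters. One must also verify the determinant identity $\det M=e^{i\phi}w_1w_2 L(\zeta)$ so that the $L(\zeta)$ appearing in the denominators is genuinely the function defined in the statement, and confirm the block-diagonal structure of $R_V^D$ that kills the cross-edge Dirichlet terms.
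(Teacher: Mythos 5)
Your proposal follows essentially the same route as the paper's proof: both insert the block-diagonal Dirichlet Green's function $\phi_j(x_<)\theta_j(x_>)/w_j$ into \eqref{krein}, use $\phi_j(0)=0$, $\phi_j'(0)=1$ to obtain the boundary data $\psi_j(0)=w_jS_j$, $\psi_j'(0)=c_j/w_j+\theta_j'(0)S_j$, impose \eqref{gencond}, and exploit the arbitrariness of $(c_1,c_2)$ to pin down the $\lambda_{j,l}$. The only difference is bookkeeping: the paper compares coefficients of $c_1,c_2$ directly, producing four scalar equations that decouple into two $2\times2$ systems, whereas you solve one $2\times2$ system in $(S_1,S_2)$ by Cramer's rule; your determinant identity $\det M=e^{i\phi}w_1(\zeta)w_2(\zeta)L(\zeta)$ is correct.

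One caveat, precisely at the spot you flagged as delicate: if you actually execute Cramer's rule you will get
$\lambda_{2,1}=-\,e^{-i\phi}/\bigl(L(\zeta)w_1(\zeta)w_2(\zeta)\bigr)$,
opposite in sign to the lemma's entry, while the other three entries come out exactly as stated. This is not a defect of your method: the same minus sign follows from the paper's own (correctly derived) second and fourth equations, and it is the sign forced by the resolvent symmetry $\lambda_{1,2}(z)=\overline{\lambda_{2,1}(\bar z)}$ combined with $ad-bc=-1$ (a sanity check with $V=0$ and Kirchhoff parameters $a=1$, $b=c=0$, $d=-1$, where the whole-line free resolvent gives $\lambda_{2,1}=i/(2\zeta)=-1/L$, confirms it). So the stated $\lambda_{2,1}$ contains a sign typo, and your closing claim of ``producing the stated matrix'' would fail for that one entry; note this has no downstream effect on the trace formula, where only the diagonal entries $\lambda_{1,1}$, $\lambda_{2,2}$ enter.
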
 
%
%
\begin{proof}
It is a well-known fact (see for example \cite{D1}), that the resolvent of the operator $H_{e_j}^D$ is an integral operator with symmetric kernel
\begin{align}\label{RD}
R_{e_j}^D(x,y;z)=\frac{\phi_j(x,\zeta)\theta_j(y,\zeta)}{w_j(\zeta)},\hspace{1cm}x\leq y,\hspace{1cm}\zeta=\sqrt z. 
\end{align}
Therefore the resolvent $R^D_V$ is a matrix integral operator with kernel
\begin{align}\label{Rinf}
R_{ j,l}^D(x,y;z)=\delta_{j,l}R_{j}^D(x,y;z)\hspace{1cm}j,l=1,2.
\end{align} 
For simplicity, we use the following notation
 $$R^{\mathcal A}_{j,l}(x,y;z)=R^{\mathcal A}_{j,l},\quad \phi_j(x,\zeta)=\phi_j(x),\quad \theta_j(y,\zeta)=\theta_j(y)\quad \mbox{and}\quad w_j(\zeta)=w_j.$$
By substituting (\ref{RD}) and (\ref{Rinf}) into \eqref{krein} and expressing it in matrix form we get
\begin{align*}
\begin{bmatrix}
R^{\mathcal A}_{1,1} &R^{\mathcal A}_{1,2}\\
R^{\mathcal A}_{2,1} &R^{\mathcal A}_{2,2}
\end{bmatrix}=
\begin{bmatrix}
\frac{\phi_1(x)\theta_1(y)+w_1\lambda_{1,1}\theta_1(x)\theta_1(y)}{w_1} &\lambda_{1,2}\theta_1(x)\theta_2(y) \\
\lambda_{2,1}\theta_2(x)\theta_1(y) &\frac{\phi_2(x)\theta_2(y)+w_2\lambda_{2,2}\theta_2(x)\theta_2(y)}{w_2}
\end{bmatrix}.
\end{align*}
Multiplying $f$ from right and using the definition of function $\psi$ the above expression becomes
\begin{align*}
\begin{bmatrix}
\psi_1(x)\\
\psi_2(x)
\end{bmatrix}=
\begin{bmatrix}
\int_0^\infty \{\frac{\phi_1(x)\theta_1(y)}{w_1}f_1(y)+\sum_{i=1}^2\lambda_{1,i}\theta_1(x)\theta_i(y)f_i(y)\}dy\\
\int_0^\infty\{\frac{\phi_2(x)\theta_2(y)}{w_2}f_2(y)+\sum_{i=1}^2\lambda_{2,i}\theta_2(x)\theta_i(y)f_i(y)\}dy
\end{bmatrix}
\end{align*} 
which is equivalent to the following system of equations
\begin{align*}
\psi_j(x)=\int\limits_0^\infty\left\lbrace\frac{\phi_j(x)\theta_j(y)}{w_j}f_j(y)+\sum_{i=1}^n\lambda_{j,i}\theta_j(x)\theta_i(y)f_i(y)\right\rbrace dy,\hspace{0.5cm}j=1,2.\\
\end{align*}
In order to determine $\lambda_{j,l}$ uniquely, we first express conditions \eqref{gencond} as a system of two equations and first  apply the sub-condition $$\psi_1(0)=e^{i\phi}(a\psi_2(0)+b\psi_2'(0))$$ on $\psi$ and Dirichlet condition on $\phi$. This yields the following expression
\begin{align}\label{wi}
&\int\limits_0^\infty\left\lbrace\sum_{i=1}^2\lambda_{1,i}\theta_1(0)\theta_i(y)f_i(y)\right\rbrace dy=\\\nonumber
&e^{i\phi}\left(a\int\limits_0^\infty\left\lbrace\sum_{i=1}^2\lambda_{2,i}\theta_2(0)\theta_i(y)f_i(y)\right\rbrace dy+b\int\limits_0^\infty\left\lbrace\frac{\theta_2(y)f_2(y)}{\theta_2(0)}+\sum_{i=1}^2\lambda_{2,i}\theta_2'(0)\theta_i(y)f_i(y)\right\rbrace dy\right).
\end{align}
Applying the second sub-condition 
$$\psi_1'(0)=e^{i\phi}(c\psi_2(0)+d\psi_2'(0))$$
 on $\psi$ and Dirichlet condition on $\phi$, we obtain
\begin{align}\label{wdi}
&\int\limits_0^\infty\left\lbrace\frac{\theta_1(y)f_1(y)}{\theta_1(0)}+\sum_{i=1}^2\lambda_{1,i}\theta_1'(0)\theta_i(y)f_i(y)\right\rbrace dy=\\\nonumber
&e^{i\phi}\left(c\int\limits_0^\infty\left\lbrace\sum_{i=1}^2\lambda_{2,i}\theta_2(0)\theta_i(y)f_i(y)\right\rbrace dy+d\int\limits_0^\infty\left\lbrace\frac{\theta_2(y)f_2(y)}{\theta_2(0)}+\sum_{i=1}^2\lambda_{2,i}\theta_2'(0)\theta_i(y)f_i(y)\right\rbrace dy\right).
\end{align} 
Comparing the coefficients of $f_i(y)\theta_i(y)$ in \eqref{wi} and \eqref{wdi}, we get the following set of equations
\begin{align*}
\theta_1(0)\lambda_{1,2}-e^{i\phi}\left(a\theta_2(0)\lambda_{2,2}-b\left(\frac{1}{\theta_2(0)}+\theta_2'(0)\lambda_{2.2}\right)\right)&=0\\
\theta_1(0)\lambda_{1,1}-e^{i\phi}\left(a\theta_2(0)\lambda_{2,1}+b\theta_2'(0)\lambda_{2,1}\right)&=0\\
\theta_1'(0)\lambda_{1,2}-e^{i\phi}\left(c\theta_2(0)\lambda_{2,2}-d\left(\frac{1}{\theta_2(0)}+\theta_2'(0)\lambda_{2.2}\right)\right)&=0\\
\frac{1}{\theta_1(0)}+\theta_1'(0)\lambda_{1,1}-e^{i\phi}\left(c\theta_2(0)\lambda_{2,1}+d\theta_2'(0)\lambda_{2,1}\right)&=0
\end{align*}
We obtain the values of $\lambda_{j,l}$ $j,l=1,2$ by solving the above set of equations.
\end{proof}

Using \eqref{Rinf} and Lemma \eqref{lambdavalues} we can express the resolvent kernel of $H_V^{\mathcal A}$ given by \eqref{krein} as 
\begin{align}\label{thmeq1}
\begin{bmatrix}
R^{\mathcal A}_{1,1} &R^{\mathcal A}_{1,2}\\
R^{\mathcal A}_{2,1} &R^{\mathcal A}_{2,2}\\
\end{bmatrix}=\begin{bmatrix}
R_{e_1}^D-\dfrac{a+b\dfrac{\theta_2'(0)}{\theta_2(0)}}{L(\zeta)\theta_1^2(0)}\theta_1^2(x) &\dfrac{(ad-bc)e^{i\phi}}{L(\zeta)\theta_1(0)\theta_2(0)}\theta_1(x)\theta_2(y)\\
\dfrac{e^{-i\phi}}{L(\zeta)\theta_1(0)\theta_2(0)}\theta_2(x)\theta_1(y) &R_{e_2}^D-\dfrac{b\dfrac{\theta_1'(0)}{\theta_1(0)}-d}{L(\zeta)\theta_2^2(0)}\theta_2^2(x)\\
\end{bmatrix}.
\end{align}
Here, for simplicity, we used the notation $R^{\mathcal A}_{i,j}(x,y;\zeta)=R^{\mathcal A}_{i,j}$, $R_{e_i}^D(x,y;\zeta)=R_{e_i}^D$ and $\theta_i(x,\zeta)=\theta_i(x)$.\\

For unperturbed operator i.e., when $V_j(x)\equiv0$, we may let 
\begin{align*}
\theta_j(x,\zeta)=e^{ix_j\zeta}\Rightarrow L(\zeta)=(a-d)i\zeta-\left(b\zeta^2+c\right).
\end{align*} 
Hence
\begin{align}\label{thmeq2}
\begin{bmatrix}
R^{\mathcal A}_{0,(1,1)} &R^{\mathcal A}_{0,(1,2)}\\
R^{\mathcal A}_{0,(2,1)} &R^{\mathcal A}_{0,(2,2)} 
\end{bmatrix}=
\begin{bmatrix}
R^D_{0,e_1}-\frac{a+i\zeta b}{(a-d)i\zeta-\left(b\zeta^2+c\right)}e^{2i\zeta x} &\frac{(ad-bc)}{(a-d)i\zeta-\left(b\zeta^2+c\right)}e^{i(x+y+\phi)\zeta}\\
\frac{1}{(a-d)i\zeta-\left(b\zeta^2+c\right)}e^{i(x+y-\phi)\zeta} &R^D_{0,e_2}-\frac{i\zeta b-d}{(a-d)i\zeta-\left(b\zeta^2+c\right)}e^{2i\zeta x}
\end{bmatrix}.
\end{align}
we now prove Theorem \eqref{t1}.
 \begin{proof}[Proof of Theorem \ref{t1}]
 The operator difference $R^D_V(z)-R^D_0(z)$ is a trace class operator and the second term on the right hand side of \eqref{krein} is a finite rank perturbation. Therefore,  the difference $R_V^{\mathcal A}(z)-R_0^{\mathcal A}$ is a trace class operator. The operator $H_V^{\mathcal A}$ can be seen as a particular case of a more general matrix valued Schr\"odinger operator considered in \cite{rec3} on the half-line when potential is a diagonal matrix. The fact that $R_V^{\mathcal A}(z)-R_0^{\mathcal A}$ is trace class also follows from Lemma 9.1 of \cite{rec3}. Representation \eqref{krein} with values of $\lambda_{j,l}$ given by lemma \eqref{lambdavalues} allows us to compute the trace of the resolvent difference of operators $H_V^{\mathcal A}$ and $H_0^{\mathcal A}$. \\

Using  \eqref{thmeq1} and \eqref{thmeq2} the trace of the difference of $R^{\mathcal A}_V$ and $R^{\mathcal A}_0$ can be written as
\begin{align}\nonumber
\Tr(R^{\mathcal A}_{V}(z)-R^{\mathcal A}_{0}(z))&=\sum\limits_{j=1}^2\int\limits_{0}^{\infty}\left(R^D_{e_j}(x,x,z)-R^D_{0,e_j}(x,x,z) \right)dx\\\nonumber
&-\int\limits_{0}^{\infty}\left(\dfrac{a+b\dfrac{\theta_2'(0)}{\theta_2(0)}}{L(\zeta)\theta_1^2(0)}\theta_1^2(x)+\dfrac{b\dfrac{\theta_1'(0)}{\theta_1(0)}-d}{L(\zeta)\theta_2^2(0)}\theta_2^2(x)\right)dx\\\label{t1meq}
&+\int\limits_{0}^{\infty}\left(\dfrac{(a-d+2i\zeta b)}{(a-d)i\zeta-(b\zeta^2+c)}e^{2ix\zeta} \right)dx.
\end{align}
The first term on the right hand side of \eqref{t1meq} is given by (c.f. Remark 1.2 of \cite{D1})
\begin{align}\label{t11}
\sum\limits_{j=1}^2\int\limits_{0}^{\infty}\left(R^D_{e_j}(x,x,z)-R^D_{0,e_j}(x,x,z) \right)dx=-\frac{1}{2\zeta}\sum\limits_{j=1}^2\frac{\dot{w}_j(\zeta)}{w_j(\zeta)}.
\end{align}
To compute the first term in the second integral on the right hand side of \eqref{t1meq} we use the following equation which is true for any arbitrary solutions $f_j(x,\zeta)$ and $g_j(x,\zeta)$ of $H^D_{e_j}\psi_j=\zeta^2\psi_j$
\begin{align}\label{idty}
f_j(x,\zeta)g_j(x,\zeta)=\dfrac{(f'_j(x,\zeta)\dot{g}_j(x,\zeta)-f_j(x,\zeta)\dot{g}'_j(x,\zeta))'}{2\zeta}.
\end{align}  
If we let $f_j=g_j=\theta_j$ we obtain
\begin{align*}
\int\limits_{0}^{\infty}\theta_j^2(x,\zeta)dx=\dfrac{[\theta'_j(x,\zeta)\dot{\theta}_j(x,\zeta)-\theta_j(x,\zeta)\dot{\theta}'_j(x,\zeta)]_0^{\infty}}{2\zeta}.
\end{align*}
If $V_j$ is compactly supported potential then the Jost solution $\theta_j(x,\zeta)=e^{i\zeta x}$ and hence
\begin{align*}
\theta'_j(x,\zeta)\dot{\theta}_j(x,\zeta)-\theta_j(x,\zeta)\dot{\theta}'_j(x,\zeta)=ie^{2i\zeta x}.
\end{align*}
For $\im \zeta>0$, $ie^{2i\zeta x}\rightarrow 0$ as $x\rightarrow\infty$. Therefore,  we are left with
\begin{align*}
&\int\limits_{0}^{\infty}\theta_j^2(x,\zeta)dx=\dfrac{\theta_j(0,\zeta)\dot{\theta}'_j(0,\zeta)-\theta'_j(0,\zeta)\dot{\theta}_j(0,\zeta)}{2\zeta}\\
&=\dfrac{\theta_j^2(0,\zeta)}{2\zeta}\dfrac{d}{d\zeta}\dfrac{\theta'_j(0,\zeta)}{\theta_j(0,\zeta)}.
\end{align*}
This implies 
\begin{align}\nonumber
&\int\limits_{0}^{\infty}\left(\dfrac{a+b\dfrac{\theta_2'(0)}{\theta_2(0)}}{L(\zeta)\theta_1^2(0)}\theta_1^2(x)+\dfrac{b\dfrac{\theta_1'(0)}{\theta_1(0)}-d}{L(\zeta)\theta_2^2(0)}\theta_2^2(x)\right)dx\\\nonumber
&=\dfrac{1}{2\zeta L(\zeta)}\left(\left(a+b\dfrac{\theta_2'(0)}{\theta_2(0)}\right)\dfrac{d}{d\zeta}\dfrac{\theta'_1(0,\zeta)}{\theta_1(0,\zeta)}+\left(b\dfrac{\theta_1'(0)}{\theta_1(0)}-d\right)\dfrac{d}{d\zeta}\dfrac{\theta'_2(0,\zeta)}{\theta_2(0,\zeta)}\right)\\\label{t12}
&=\dfrac{\dot{L}(\zeta)}{2\zeta L(\zeta)}.
\end{align}
By a density argument the above result can be extended to all potentials satisfying the condition $\int_0^\infty|V_j(x)|<\infty$.\\

Now it only remains to compute the last integral on the right hand side of \eqref{t1meq}, which is
\begin{align}\label{t13}
\int\limits_{0}^{\infty}\left(\dfrac{(a-d+2i\zeta b)}{(a-d)i\zeta-b\zeta^2+c}e^{2ix\zeta} \right)dx=\dfrac{(a-d+2i\zeta b)}{2\zeta((a-d)\zeta+(b\zeta^2+c)i)}.
\end{align}
Substituting values from  \eqref{t11}, \eqref{t12} and \eqref{t13} in \eqref{t1meq}, we get
\begin{align*}
\Tr(R^{\mathcal A}_{V}(z)-R^{\mathcal A}_{0}(z))&=-\frac{1}{2\zeta}\left(\sum\limits_{j=1}^2\frac{\dot{w}_j(\zeta)}{w_j(\zeta)}+\dfrac{\dot{L}(\zeta)}{L(\zeta)}-\dfrac{(a-d+2i\zeta b)}{(a-d)\zeta+(b\zeta^2+c)i}\right)\\
&=-\frac{1}{2\zeta}\left(\sum\limits_{j=1}^2\frac{d}{d\zeta}\ln({w_j(\zeta)})+\frac{d}{d\zeta}\ln({L(\zeta)})-\frac{d}{d\zeta}\ln({(a-d)\zeta+(b\zeta^2+c)i})\right)\\
&=-\frac{1}{2\zeta}\left(\dfrac{d}{d\zeta}\ln\left(\dfrac{L(\zeta)}{(a-d)\zeta+(b\zeta^2+c)i}\prod\limits_{j=1}^2w_j(\zeta)\right)\right).
\end{align*}

The second term on the right hand side of \eqref{krein} with values of $\lambda_{j,l}$ given by lemma \eqref{lambdavalues} is a finite rank operator and if the potential function $V$ satisfy condition \eqref{vcond} then the operator $\sqrt{|V|}(H^{\mathcal A}_0)^{-1/2}$ is Hilbert-Schmidt and therefore, $\sqrt{V}R_0^{\mathcal A}\sqrt{|V|}$ is trace class. This implies that the perturbation determinant $D(z)$ is well defined. The perturbation determinant and trace of the resolvent difference $R^{\mathcal A}_V(z)-R_0^{\mathcal A}(z)$ are related by the following expression
$$
\Tr(R^{\mathcal A}_V(z)-R_0^{\mathcal A}(z))=\frac{\frac{d}{dz}D(z)}{D(z)},\quad z\in\rho(H_V^{\mathcal A})\cap\rho(H_0^{\mathcal A}).
$$

%
%
In order to find an explicit expression for the perturbation determinant  we choose $\zeta=\sqrt{z}$ such that Im $(z)>0$. It follows from Theorem \ref{t1} that
\begin{align*}
\dfrac{\dfrac{d}{dz}( D(z))}{D(z)}=\frac{\dfrac{d}{dz}\left(\dfrac{L(\sqrt{z})}{(a-d)\sqrt{z}+(bz+c)i}\prod\limits_{j=1}^2w_j(\sqrt{z})\right)}{\dfrac{L(\sqrt{z})}{(a-d)\sqrt{z}+(bz+c)i}\prod\limits_{j=1}^2w_j(\sqrt{z})}.
\end{align*}
From here we deduce that
\begin{align*}
D(z)=A\dfrac{L(\sqrt{z})}{(a-d)\sqrt{z}+(bz+c)i}\prod\limits_{j=1}^2w_j(\sqrt{z}),\hspace{0.5cm}A\in\mathbb{C}.
\end{align*}
To find the value of the complex coefficient $A$, we use the following asymptotic of perturbation determinant
\begin{align*}
\lim_{|\text{Im}(z)|\rightarrow\infty}D(z)=1.
\end{align*}
This asymptotic holds if $\sqrt{|V|R^{\mathcal A}_0(z)^{-1}}$ is Hilbert-Schmidt operator (see \cite{yaf}). For the asymptotic behaviour of $w_j(\sqrt{z})$ as $|\sqrt{z}|\rightarrow\infty$ , we refer to \cite{DU}, which provides
\begin{align*}
w_j(\sqrt{z})=\theta_j(0,\sqrt{z})=1+O(|\sqrt{z}|^{-1}).
\end{align*}
Further, it is easy to find that
\begin{align}\label{kasymp}
L(\sqrt{z})=(a-d)\sqrt{z}i-(bz+c)+O(1),\quad |\sqrt{z}|\rightarrow \infty.
\end{align}
This implies, $A=\dfrac{1}{i}$. 
\end{proof}
%
%

\subsection{Spectral shift function}
In this section we briefly discuss the relationship between perturbation determinant $D(z)$ and the spectral shift function $\xi(\lambda; H_V^\mathcal A, H_0^\mathcal A)$.  The spectral shift function  for the pair of operators $H_V^{\mathcal A}$ and $H_0^{\mathcal A}$ can be defined by the trace formula
\begin{equation}\label{ssfdef}
\Tr(f(H_V^{\mathcal A}-H_0^{\mathcal A})):=\int_{-\infty}^{\infty}\xi(\lambda; H_V^{\mathcal A},H_0^{\mathcal A})f'(\lambda)\, d\lambda
\end{equation}
for all $f\in C_0^\infty(\mathbb R)$.\\

Theorem 9.2 of \cite{rec3} explicitly characterize the spectral shift function in the more general case of matrix valued Schr\"odinger operators with general self-adjoint conditions at the origin. It was proved that the trace formula \eqref{ssfdef} holds for a bigger class of functions satisfying $f'(\lambda)=O(\lambda^{-1/2-\epsilon}),\, f''(\lambda)=O(\lambda^{-1-\epsilon}),\,\epsilon>0,\, \lambda\rightarrow \infty$ and 
$$
\int_{-\infty}^\infty|\xi(\lambda;H^{\mathcal A}_V,H^{\mathcal A}_{0})|(1+|\lambda|)^{-1/2-\epsilon}\,d\lambda<\infty,\quad \epsilon >0.
$$
One of the main ingredients in the proof of Theorem 9.2 of \cite{rec3} is the following estimate\footnote[1]{The interested reader can consult Appendix \eqref{appendix} where explicit derivation of this  estimate is provided.} (cf. inequality (9.2) of \cite{rec3}, see also (5.11) of \cite{yaf} )
\begin{equation}\label{resest}
||R^{\mathcal A}_{V}(-t)-R^{\mathcal A}_{0}(-t)||_1\leq C_\epsilon t^{-\frac{3}{2}+\epsilon},\quad \epsilon >0,\quad t\rightarrow \infty.
\end{equation}
The above estimate implies
$$
\int_1^\infty t^{-m}||R^{\mathcal A}_{V}(-t)-R^{\mathcal A}_{0}(-t)||_1<\infty $$
which holds for all $m>-1/2$. Proposition (3.3) of \cite{D1} then implies
$$
\xi(\lambda;\,H^{\mathcal A}_V,H^{\mathcal A}_0)=\pi^{-1}\lim_{\epsilon\rightarrow 0}\emph{arg}\,D(\lambda+i\epsilon),
$$
where $\emph{arg}\,D(z)=\im\ln D(z)$ is defined by the condition $\ln D(z)\rightarrow 0$ as $\dist\{z,\sigma(H^{\mathcal{A}}_0)\}\rightarrow \infty$. Moreover,
$$
\ln D(z)=\int_{-\infty}^\infty \xi(\lambda;\,H^{\mathcal A}_V,H^{\mathcal A}_0)(\lambda-z)^{-1}\,d\lambda, \quad z\in\rho(H^{\mathcal A}_0)\cap\rho(H^{\mathcal A}_V).
$$


\section{Levinson's formula}
In this section we derive a zero order trace formula, commonly known as Levinson's formula, which describes the number of negative eigenvalues of an operator in terms of phase shift function which is denoted by $\eta$ and it is defined, for real $k$, as $\eta(k):=\arg D(k)$. The Levinson's formula derived in this section is closely related to the Levinson's formulas obtained in Theorem 9.3 of \cite{Ak2} and Theorem 9.3 of \cite{rec3}. Levinson's formula obtained in \cite{Ak2} relates the number of negative eigenvalues with the determinant of the scattering matrix and certain parameters that depend on the boundary conditions. Formula obtained in \cite{rec3} relates the number of negative eigenvalues with the spectral shift function. The vertex conditions of this paper can be  expressed as a particular case of more general conditions considered in references \cite{Ak2} and \cite{rec3} and therefore the Levinson's formulas obtained there also hold for the operator $H_V^\mathcal A$.  For higher order trace formulas of integer and half-integer order we refer to \cite{yaf, rec2, UA}.\\

We will need the following low-energy asymptotics of the perturbation determinant\begin{lemma}[Low energy asymptotics]\label{Low_ergy_Asymp}
Let the potential $V$ satisfies the condition
\begin{align}\label{Dec_Condition}
\int\limits_{e_j}(1+x)|V_j(x)|dx<\infty,\hspace*{0.3cm}1\leq j\leq n.
\end{align}
Then as $\zeta\rightarrow0$ the perturbation determinant satisfies

\[ D(\zeta)=\begin{cases} 
      \zeta^{(Q-P)}(1+o(1)) & w_j(0)\neq0$, $j=1,2,\\
      \zeta^{(R-P)}(1+o(1)) & w_1(0)=0,w_2(0)\neq0,\\
             \zeta^{(S-P)}(1+o(1)) & w_1(0)\neq0,w_2(0)=0,\\
                   \zeta^{(T-P)}(1+o(1)) & w_j(0)=0$, $j=1,2.
   \end{cases}
\]
Where the functions $P,Q,R, S$ and $T$ are defined in Theorem \eqref{lea}.
\end{lemma}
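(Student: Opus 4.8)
The plan is to read the low-energy behaviour of $D$ straight off the closed form \eqref{PDEx}. Writing $\zeta=\sqrt z$, recalling $w_j(\zeta)=\theta_j(0,\zeta)$, and abbreviating the Jost data at the origin ($\theta_j=\theta_j(0,\zeta)$, $\theta_j'=\theta_j'(0,\zeta)$), formula \eqref{PDEx} becomes $D(\zeta)=M(\zeta)/N(\zeta)$ with
\[
\begin{gathered}
M(\zeta)=a\theta_1'\theta_2-d\theta_1\theta_2'+b\theta_1'\theta_2'-c\theta_1\theta_2=L(\zeta)w_1(\zeta)w_2(\zeta),\\
N(\zeta)=(a-d)i\zeta-(b\zeta^2+c).
\end{gathered}
\]
Since $D$ is meromorphic near $\zeta=0$, its leading order there is $\operatorname{ord}_0 M-\operatorname{ord}_0 N$. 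The whole lemma therefore reduces to computing these two orders separately and subtracting: I will show $\operatorname{ord}_0 N=P$ and that $\operatorname{ord}_0 M$ equals $Q,R,S,T$ in the four respective cases, which produces the four displayed asymptotics at once. Here $P,Q,R,S,T$ are as in Theorem~\ref{lea}.

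The denominator is immediate: $N(0)=-c$, $\dot N(0)=(a-d)i$, $\ddot N(0)=-2b$, so $N$ vanishes to order $0,1,2$ exactly according to the three alternatives defining $P$. These are exhaustive because $ad-bc=-1$ rules out $c=0,\ a=d,\ b=0$ holding simultaneously (that would force $a^2=-1$). Hence $\operatorname{ord}_0 N=P$.

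For the numerator I use two facts. First, under \eqref{Dec_Condition} the Jost data $\theta_j(0,\zeta),\theta_j'(0,\zeta)$ extend to $C^1$ functions up to $\zeta=0$ with real values there (cf.\ \cite{DU}); I write $w_j,w_j',\dot w_j,\dot w_j'$ for $\theta_j(0,0),\theta_j'(0,0),\dot\theta_j(0,0),\dot\theta_j'(0,0)$. Second, differentiating $-\theta_j''+V_j\theta_j=\zeta^2\theta_j$ in $\zeta$ shows that $\theta_j\dot\theta_j'-\theta_j'\dot\theta_j$ has $x$-derivative $-2\zeta\theta_j^2$, so it is $x$-independent at $\zeta=0$; evaluating it at $x=\infty$ through $\theta_j(x,\zeta)\sim e^{i\zeta x}$ yields the Wronskian identity $w_j\dot w_j'-w_j'\dot w_j=i$. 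I now expand $M$ case by case. If $w_1,w_2\neq0$ then $M(0)=w_1w_2L(0)$, so $M(0)=0\Leftrightarrow L(0)=0$; and when $L(0)=0$ the Wronskian identity lets me eliminate $\dot w_1',\dot w_2'$ from $\dot M(0)$, after which the surviving $\dot w_1,\dot w_2$ terms drop out because their coefficients are multiples of $M(0)=0$, leaving a quantity that vanishes precisely when $\alpha_1=0$; thus $\operatorname{ord}_0 M=Q$. If $w_1(0)=0\neq w_2(0)$ the terms carrying $\theta_1(0,0)$ disappear and $M(0)=\theta_1'(0,0)\bigl(a\theta_2(0,0)+b\theta_2'(0,0)\bigr)=\alpha_2$, giving $\operatorname{ord}_0 M=R$; the case $w_2(0)=0\neq w_1(0)$ is analogous, with $M(0)=\theta_2'(0,0)\bigl(b\theta_1'(0,0)-d\theta_1(0,0)\bigr)$ the constant governing $S$, so $\operatorname{ord}_0 M=S$. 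Finally, if $w_1(0)=w_2(0)=0$ then $M(0)=b\,\theta_1'(0,0)\theta_2'(0,0)$ and, substituting $\dot w_j=-i/w_j'$ from the Wronskian identity, $\dot M(0)=\alpha_4$, so $\operatorname{ord}_0 M=T$.

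The delicate point, and where I expect the real work to lie, is proving that the order is \emph{exactly} $Q,R,S,T$ in the degenerate subcases, i.e.\ that the next coefficient is genuinely nonzero there. The mechanism is uniform and is forced by self-adjointness: the constraint $ad-bc=-1$ together with the reality of the zero-energy Jost data turns each critical coefficient into a definite, sum-of-squares-type expression. For instance, when $w_1(0)=0\neq w_2(0)$ and $\alpha_2=0$, the relation $ad-bc=-1$ first forces $b\neq0$ (else $\alpha_2=a\theta_1'(0,0)\theta_2(0,0)$ with $a\neq0$), and then the Wronskian identity gives $\dot M(0)=i\bigl(b^2w_1'^2+w_2^2\bigr)/\bigl(bw_1'w_2\bigr)\neq0$, pinning the order at $1=R$. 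The same reality obstruction is what makes the two implications $\alpha_1=0\Leftrightarrow\dot M(0)=0$ hold (rather than a mere proportionality), and it controls the hardest subcases $Q=2$ (where $b\neq0$) and $T=2$ (where $c\neq0$): there $M(0)=\dot M(0)=0$ and one must reach the $\zeta^2$-coefficient, which I would obtain by differentiating the Wronskian relation a second time to trade the second $\zeta$-derivatives of the Jost data for first-order data, then invoking $b\neq0$ (resp.\ $c\neq0$) and $ad-bc=-1$ to conclude non-vanishing exactly as above. Combining the two orders gives $D(\zeta)=\zeta^{\operatorname{ord}_0 M-P}(1+o(1))$ with $\operatorname{ord}_0 M\in\{Q,R,S,T\}$, which is the assertion of the lemma.
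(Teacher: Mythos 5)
You follow the same skeleton as the paper: read the order of vanishing at $\zeta=0$ of the numerator $M(\zeta)=L(\zeta)w_1(\zeta)w_2(\zeta)$ and of the denominator $N(\zeta)=(a-d)i\zeta-(b\zeta^2+c)$ of \eqref{PDEx}, case by case. But your toolkit is genuinely different: where the paper imports low-energy expansions from Aktosun--Klaus--Weder and Deift--Trubowitz, you derive everything from the zero-energy Wronskian identity $w_j\dot w_j'-w_j'\dot w_j=i$, and you add exactness arguments (non-vanishing of the next coefficient, forced by $ad-bc=-1$ and reality of the zero-energy Jost data) which the paper merely asserts. Your identity is correct, your observation that $ad-bc=-1$ makes the cases of $P$ exhaustive is a nice addition, and your computation $\dot M(0)=i\bigl(b^2w_1'^2+w_2^2\bigr)/\bigl(bw_1'w_2\bigr)$ when $w_1(0)=0$, $\alpha_2=0$ is right and sharper than what the paper offers there (the paper's linear coefficient $c_3$ is in fact incomplete, though its conclusion survives). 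Two genuine gaps remain.

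First, the identification of your computed coefficients with the \emph{named} constants of \eqref{constants} fails. In the case $w_j(0)\neq0$, $L(0)=0$, the elimination you describe yields
\begin{equation*}
\dot M(0)=i\left(a\frac{w_2}{w_1}-d\frac{w_1}{w_2}+b\left(\frac{w_1'}{w_2}+\frac{w_2'}{w_1}\right)\right),
\end{equation*}
which is \emph{not} the paper's $\alpha_1$: there the $a$- and $d$-terms carry the opposite sign, and for $b\neq0$ the two expressions have different zero sets. So your step ``leaving a quantity that vanishes precisely when $\alpha_1=0$'' is not established, and is false for $\alpha_1$ as defined in \eqref{constants}. (Sanity check: for $V\equiv0$, $c=0$ the numerator is exactly $(a-d)i\zeta-b\zeta^2$, with linear coefficient $i(a-d)$, agreeing with your formula and not with $\alpha_1=i(d-a)$; the root cause is a sign slip in the paper's transcription of the cited asymptotic, which in this paper's conventions must read $\theta_j'(0,\zeta)/\theta_j(0,\zeta)=w_j'/w_j+i\zeta/w_j^2+o(\zeta)$.) The same issue recurs in your fourth case: your (correct) constant $w_2'\bigl(bw_1'-dw_1\bigr)$ is what governs $S$ in the paper's \emph{proof}, but it is not the $\alpha_3$ of \eqref{constants}, which the theorem statement uses to define $S$. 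You cannot simply assert agreement; you must either exhibit the computation and record the discrepancy (your constants are the correct ones) or find a compensating error.

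Second, the sub-cases $Q=2$ and $T=2$ --- exactly where the lemma is delicate --- are only sketched, and the sketched route cannot be carried out under hypothesis \eqref{Dec_Condition}. Reading off a $\zeta^2$-coefficient requires second-order information on the Jost data at $\zeta=0$, which the first-moment condition does not supply: the standard integral-equation bounds need $\int x^2|V_j|\,dx<\infty$ for $\ddot w_j(0)$ to exist, and your specific plan of ``differentiating the Wronskian relation a second time'' breaks down because the second-order Wronskian $\theta_j\ddot\theta_j'-\theta_j'\ddot\theta_j$ has $x$-derivative $-2\theta_j(x,0)^2$ at $\zeta=0$, which is not integrable on $(0,\infty)$ since $\theta_j(x,0)\to1$, so there is no finite evaluation at $x=\infty$ to trade against. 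More basically, a product of two expansions each valid to $o(\zeta)$ is itself valid only to $o(\zeta)$, so no $\zeta^2$-coefficient can be extracted from the first-order data you use. To be fair, the paper's own proof shares this defect (it writes the numerator as $\alpha_1\zeta+o(\zeta)+b(\zeta^2+o(\zeta^2))$ without justification), but as a standalone proof your proposal is incomplete precisely at these two sub-cases; closing them honestly requires either a strengthened decay hypothesis or a different argument.
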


\begin{proof}
Expression for perturbation determinant is
\begin{align*}
D(\zeta)=\dfrac{L(\zeta)\prod\limits_{j=1}^2w_j(\zeta)}{(a-d)i\zeta-(b\zeta^2+c)}.
\end{align*}

We consider the following five different cases
\begin{enumerate}
\item $w_j(0)\neq0$, $j=1,2$ and $L(0)\neq 0$. 
\item $w_j(0)\neq 0$, $j=1,2$ and $L(0)= 0$.
\item $w_1(0)=0$ and $w_2(0)\neq0$.
\item $w_1(0)\neq 0$ and $w_2(0)=0$.
\item $w_j(0)=0$, $j=1,2$.
\end{enumerate}
In the first case $D(\zeta)$ has no zeros and we can let
\begin{align*}
&L(0)\prod\limits_{j=1}^2w_j(0)=\tilde{c}\neq0\\
&\Rightarrow \lim\limits_{\zeta\rightarrow0}D(\zeta)=\lim\limits_{\zeta\rightarrow0}\dfrac{\tilde{c}}{(a-d)i\zeta-(b\zeta^2+c)}\\
&\Rightarrow \lim\limits_{\zeta\rightarrow0}D(\zeta)((a-d)i\zeta-(b\zeta^2+c))=\tilde{c}\\
&\Rightarrow \lim\limits_{\zeta\rightarrow0}\dfrac{D(\zeta)((a-d)i\zeta-(b\zeta^2+c))-\tilde{c}}{\tilde{c}}=0\\
&\Rightarrow D(\zeta)=\frac{\tilde{c}}{(a-d)i\zeta-(b\zeta^2+c)}(1+o(1)).
\end{align*}
In the second case, we make use of the following small energy asymptotics (c.f. Theorem 2.3 (iii)of \cite{Ak1})  
\begin{align*}
\dfrac{\theta_j'(0,\zeta)}{\theta_j(0,\zeta)}=\dfrac{\theta_j'(0,0)}{\theta_j(0,0)}-\dfrac{i\zeta}{\theta_j^2(0,0)}+o(\zeta)
\end{align*}
which implies
\begin{align*}
L(\zeta)=&a\left(\frac{\theta_1'(0,0)}{\theta_1(0,0)}-\frac{i\zeta}{\theta_1^2(0,0)}+o(\zeta)\right)-d\left(\frac{\theta_2'(0,0)}{\theta_2(0,0)}-\frac{i\zeta}{\theta_2^2(0,0)}+o(\zeta)\right)\\
&+b\left(\frac{\theta_1'(0,0)}{\theta_1(0,0)}-\frac{i\zeta}{\theta_1^2(0,0)}+o(\zeta)\right)\left(\frac{\theta_2'(0,0)}{\theta_2(0,0)}-\frac{i\zeta}{\theta_2^2(0,0)}+o(\zeta)\right)+c\\
=&a\frac{\theta_1'(0,0)}{\theta_1(0,0)}-d\frac{\theta_2'(0,0)}{\theta_2(0,0)}+b\frac{\theta_1'(0,0)}{\theta_1(0,0)}\frac{\theta_2'(0,0)}{\theta_2(0,0)}-c+a\left(\frac{-i\zeta}{\theta_1^2(0,0)}+o(\zeta)\right)\\
&-d\left(\frac{-i\zeta}{\theta_2^2(0,0)}+o(\zeta)\right)+b\left[\left(\frac{\theta_1'(0,0)}{\theta_1(0,0)\theta_2^2(0,0)}+\frac{\theta_2'(0,0)}{\theta_1^2(0,0)\theta_2(0,0)}\right)i\zeta\right]\\
&+b\left(o(\zeta)+o(\zeta^2)-\frac{\zeta^2}{\theta_1(0,0)\theta_2(0,0)}\right)
\end{align*}
This yields
\begin{align*}
L(\zeta)=&L(0)+a\left(\frac{-i\zeta}{\theta_1^2(0,0)}+o(\zeta)\right)-d\left(\frac{-i\zeta}{\theta_2^2(0,0)}+o(\zeta)\right)\\
&+b\left[\left(\frac{\theta_1'(0,0)}{\theta_1(0,0)\theta_2^2(0,0)}+\frac{\theta_2'(0,0)}{\theta_1^2(0,0)\theta_2(0,0)}\right)i\zeta+o(\zeta)+o(\zeta^2)-\frac{\zeta^2}{\theta_1(0,0)\theta_2(0,0)}\right].
\end{align*}
Therefore,
\begin{align*}
D(\zeta)=\dfrac{\alpha_1\zeta+o(\zeta)+b(\zeta^2+(o(\zeta^2)))}{(a-d)i\zeta-(b\zeta^2+c)},
\end{align*}
where, $\alpha_1=\left(b\left(\dfrac{\theta_1'(0,0)}{\theta_2(0,0)}+\dfrac{\theta_2'(0,0)}{\theta_1(0,0)}\right)-a\dfrac{\theta_2(0,0)}{\theta_1(0,0)}+d\dfrac{\theta_1(0,0)}{\theta_2(0,0)}\right)i$.\\
The above calculations show that the order of the numerator of $D(\zeta)$ depends on the value of the parameter $b$. If $b=0$, then it will be of order $\zeta$ otherwise it will be of order $\zeta^2$.
 
The study of first two cases can be summarized as follows
\begin{itemize}
\item If $w_j(0)\neq 0$, $j=1,2$ and $L(0)\neq 0$ then the numerator of $D(\zeta)$ is a non zero constant and it will have no zeros.
\item  If $w_j(0)\neq 0$, $j=1,2$ and $L(0)=0$ then the numerator of $D(\zeta)$ for small $\zeta$ will have one zero in case $\alpha_1 \neq 0$ and it will have two zeros in case $\alpha_1 =0$, $b \neq 0$.
\end{itemize}
This gives us the following low energy asymptotics in the first two cases
\begin{align*}
D(\zeta)=\zeta^{(Q-P)}(1+o(1)),
\end{align*}
where
\[ Q=Q(b,\alpha_1,L(0))=\begin{cases} 
      0 & L(0)\neq0, \\
      1 & L(0)=0,\alpha_1\neq0\\
      2 & L(0)=0,\alpha_1=0,b\neq0 
   \end{cases}
\]
and
\[P= P(a,b,c,d)=\begin{cases} 
      0 & c\neq0, \\
      1 & c=0,a-d\neq0\\
      2 & c=0,a-d=0,b\neq0 .
   \end{cases}
\]
The function $P$ gives the number of poles of $D(\zeta)$ in the given cases.\\

Let us now consider the third case, i.e., when $w_1(0)=0$ and $w_2(0)\neq0$. We rewrite $D(\zeta)$ as
\begin{align}\nonumber
D(\zeta)&=\dfrac{\left(a\dfrac{\theta_1'(0,\zeta)}{\theta_1(0,\zeta)}-d\dfrac{\theta_2'(0,\zeta)}{\theta_2(0,\zeta)}+b\prod\limits_{k=1}^2\dfrac{\theta_k'(0,\zeta)}{\theta_k(0,\zeta)}-c\right)\prod\limits_{j=1}^2\theta_j(0,\zeta)}{(a-d)i\zeta-(b\zeta^2+c)}\\\label{sympD}
&=\dfrac{\left(a\theta_1'(0,\zeta)\theta_2(0,\zeta)+b\theta_1'(0,\zeta)\theta_2'(0,\zeta)-d\theta_1(0,\zeta)\theta_2'(0,\zeta)-c\prod\limits_{j=1}^2\theta_j(0,\zeta)\right)}{(a-d)i\zeta-(b\zeta^2+c)}
\end{align}
and use the following asymptotic for $\theta_1(x,\zeta)$ (c.f. \cite{dei})  \begin{align*}
 \theta_1(0,\zeta)=\zeta\dot{\theta}_1(0,0)+o(\zeta).
 \end{align*}
 This implies
 \begin{align*}
 D(\zeta)=\dfrac{\left(\alpha_2-c_3\zeta+o(\zeta)\right)}{(a-d)i\zeta-(b\zeta^2+c)}
 \end{align*}
Where $\alpha_2=a\theta_1'(0,0)\theta_2(0,0)+b\theta_1'(0,0)\theta_2'(0,0)$ and $c_3=(c\theta_2(0,0)+d\theta_2'(0,0))\dot{\theta}_1(0,0)$.
The numerator of $D(\zeta)$ is equal to $\alpha_2-c_3\zeta+o(\zeta)$ as $\zeta \rightarrow 0$.
If $\alpha_2\neq0$ then the numerator of $D(\zeta)$ has no zeros for small $\zeta$ and  if $\alpha_2=0$, then the numerator of $D(\zeta)$ has one zero for small $\zeta$. Hence,
\begin{align*}
D(\zeta)=\zeta^{(R-P)}(1+o(1)),\quad \zeta \rightarrow 0,
\end{align*}
where,
\[ R=R(\alpha_2)=\begin{cases} 
      0 & \alpha_2\neq0, \\
      1 & \alpha_2=0.
   \end{cases}
\]\\

For the fourth case we substitute
\begin{align*}
 \theta_2(0,\zeta)=\zeta\dot{\theta}_2(0,0)+o(\zeta),\zeta \rightarrow 0
\end{align*}
into \eqref{sympD} and obtain
 \begin{align*}
 D(\zeta)=\dfrac{\left(\alpha_3+c_5\zeta+o(\zeta)\right)}{(a-d)i\zeta-(b\zeta^2+c)}.
 \end{align*}
Here, $\alpha_3=b\theta_1'(0,0)\theta_2'(0,0)-d\theta_1(0,0)\theta_2'(0,0)$ and $c_5=(a\theta_1'(0,0)-c\theta_1(0,0))\dot{\theta}_2(0,0)$. The numerator of $D(\zeta)$ for small $\zeta$ has no zeros if $\alpha_3 \neq 0$ and it has one zero if $\alpha_3 =0$.
Therefore,
\begin{align*}
D(\zeta)=\zeta^{(S-P)}(1+o(1)),\quad \zeta\rightarrow 0,
\end{align*}
where, 
\[S=S(\alpha_3)=\begin{cases} 
      0 & \alpha_3\neq0, \\
      1 & \alpha_3=0.
   \end{cases}
\]\\

Finally, for the fifth case we substitute the asymptotics
\begin{align*}
\theta_1(0,\zeta)=\zeta\dot{\theta}_1(0,0)+o(\zeta) \text{ and }\theta_2(0,\zeta)=\zeta\dot{\theta}_2(0,0)+o(\zeta),\quad\zeta \rightarrow 0
\end{align*}
into \eqref{sympD} and obtain
 \begin{align*}
 D(\zeta)=\dfrac{c\left(c_6\zeta^2+o(\zeta^2)\right)+\alpha_4\zeta+bc_8+o(\zeta)}{(a-d)i\zeta-(b\zeta^2+c)},\quad \zeta \rightarrow 0.
 \end{align*}
Here, $c_6=\dot{\theta}_1(0,0)\dot{\theta}_2(0,0)\neq0$, $c_8=\theta_1'(0,0)\theta_2'(0,0)\neq0$ (because $\theta(0,0)=0\Rightarrow\dot{\theta}(0,0)\theta'(0,0)=-i$, the proof is given in \cite[~Lemma 4.6.(1)]{D1}, hence $\dot{\theta}(0,0)\neq0$ and $\theta'(0,0)\neq0$) $\alpha_4=a\theta_1'(0,0)\dot{\theta}_2(0,0)-d\dot{\theta}_1(0,0)\theta_2'(0,0)$.
This implies the numerator of $D(\zeta)$ for small $\zeta$ has no zeros if $b\neq 0$, it has one zero if $b=0, \alpha_4\neq0$ and it has two zeros if $b=0$, $\alpha_4=0$, $c\neq 0$.
This yields
\begin{align*}
D(\zeta)=\zeta^{(T-P)}(1+o(1)),\quad \zeta\rightarrow 0,
\end{align*}
where
\[T=T(b,c,\alpha_4)=\begin{cases} 
      0 & b\neq0, \\
      1 & b=0,\alpha_4\neq0\\
      2 & b=0,\alpha_4=0,c\neq0 .
   \end{cases}
\]

\end{proof}
Using the low energy asymptotics of $D(\zeta)$ we now prove the analogue of Levinson's formula for the negative eigenvalues of $H_V^{\mathcal A}$, stated in Theorem \eqref{lea}.
\begin{proof}[Proof of Theorem \ref{lea}]
The function $D(\zeta)$ has a zero in $\zeta$ of order $r$ if and only if $\zeta^2$ is an eigenvalue of multiplicity $r$ of the operator $H_{V}^{\mathcal A}$ \cite{Ak2}. As $H_V^{\mathcal A}$ is a self-adjoint operator, so the zeros of $D(\zeta)$ lie on the positive imaginary axis and it may have only real eigenvalues.\\

Let $\Gamma_{R,\epsilon}$ denotes the contour (counterclockwise) consisting of semicircles $C^+_R=\{|\zeta|=R,\quad \mbox{Im } \zeta\geq0\}$ and $C^+_\epsilon=\{|\zeta|=\epsilon,\quad \mbox{Im } \zeta\geq0\}$ and the intervals $(\epsilon,R)$ and $(-R,-\epsilon)$.
$R$ and $\epsilon$ are chosen such that all $N$ negative eigenvalues lie inside the contour $\Gamma_{R,\epsilon}$.\\

The function $w_j(\zeta)$ is analytic in the upper halfp-plane, therefore, $D(\zeta)$ is analytic inside and on the contour $\Gamma_{R,\epsilon}$. Hence,
\begin{align}\label{arg_prin}
\int\limits_{\Gamma_{R,\epsilon}}\dfrac{\frac{d}{d\zeta}D(\zeta)}{D(\zeta)}d\zeta=2\pi iN.
\end{align}
Note that,
\begin{align}\label{limit_D(xi)}
\lim\limits_{\mbox{Im }\zeta\rightarrow\infty}D(\zeta)=\lim\limits_{\mbox{Im }\zeta\rightarrow\infty}\dfrac{L(\zeta)\prod\limits_{j=1}^2w_j(\zeta)}{(a-d)i\zeta-(b\zeta^2+c)}=1+O(|\zeta|^{-1}).
\end{align}
The branch of the function $\ln D(\zeta)$ can be fixed by the condition $\ln D(\zeta)\rightarrow0$ as $|\zeta|\rightarrow\infty$. 
For $k\in \mathbb{R}$, we set $D(k)=a(k)e^{i \eta (k)}$, where $a(k)=|D(k)|$ and $\eta(k)=\arg D(k)$. 
Now
\begin{align*}
\mbox{var}_{\Gamma_{R,\epsilon}}\arg D(\zeta)=\eta(R)-\eta(\epsilon)+\mbox{var}_{C_R^+}\arg D(\zeta)+\{\eta(-\epsilon)-\eta(-R)\}+\mbox{var}_{C_{\epsilon}^+}\arg D(\zeta).
\end{align*}
It follows from the representation of $D(k)$ that $\eta(-k)=-\eta(k)$. Hence,
\begin{align*}
\mbox{var}_{\Gamma_{R,\epsilon}}\arg D(\zeta)=2(\eta(R)-\eta(\epsilon))+\mbox{var}_{C_R^+}\arg D(\zeta)+\mbox{var}_{C_{\epsilon}^+}\arg D(\zeta)
\end{align*}
Equation \eqref{arg_prin} implies that
\begin{align*}
\mbox{var}_{\Gamma_{R,\epsilon}}\arg D(\zeta)=2\pi N.
\end{align*}
This implies
\begin{align*}
2\pi N=2(\eta(R)-\eta(\epsilon))+\mbox{var}_{C_R^+}\arg D(\zeta)+\mbox{var}_{C_{\epsilon}^+}\arg D(\zeta).
\end{align*}
Using \eqref{limit_D(xi)}, one can see that $\lim\limits_{R\rightarrow\infty}\mbox{var}_{C_R^+}\arg D(\zeta)=0$. We are left with
\begin{align*}
\eta(\infty)-\eta(\epsilon)=\pi N-\dfrac{1}{2}\mbox{var}_{C_{\epsilon}^+}\arg D(\zeta).
\end{align*}
Now letting $\epsilon\rightarrow0$ and using Lemma \eqref{Low_ergy_Asymp}, we arrive at
\[ \lim\limits_{\epsilon\rightarrow0}\mbox{var}_{C_{\epsilon}^+}\arg D(\zeta)=\begin{cases} 
      -(Q-P)\pi  & w_j(0)\neq0$, $j=1,2,\\
      -(R-P)\pi  & w_1(0)=0,w_2(0)\neq0,\\
      -(S-P)\pi  & w_1(0)\neq0,w_2(0)=0,\\
      -(T-P)\pi  & w_j(0)=0$, $j=1,2.
   \end{cases}
\]
Therefore,
\[ \eta(\infty)-\eta(0)=\begin{cases} 
      \pi (N+\dfrac{Q-P}{2}) & w_j(0)\neq0$, $j=1,2,\\
      \pi (N+\dfrac{R-P}{2}) & w_1(0)=0,w_2(0)\neq0,\\
      \pi (N+\dfrac{S-P}{2}) & w_1(0)\neq0,w_2(0)=0,\\
      \pi (N+\dfrac{T-P}{2}) & w_j(0)=0$, $j=1,2.
   \end{cases}
\]
\end{proof} 
\section{examples}
In this section we illustrate the trace formula and perturbation determinant obtained in theorem \eqref{t1} with some examples. 
\begin{example} For the $\delta$ interaction of strength $\alpha\in\mathbb R$, we can choose $\phi=0,\, a=1,\, b=0,\, c=\alpha$ and $d=-1$. Then according to theorem \eqref{t1} the trace of the resolvent difference is given by
$$
\Tr(R_V^{\mathcal A}(\zeta)-R_0^{\mathcal A}(\zeta))=-\frac{1}{2\zeta}\ln \left(\frac{\theta_1'(0,\zeta)\theta_2(0,\zeta)+\theta_1(0,\zeta)\theta_2'(0,\zeta)-\alpha\theta_1(0,\zeta)\theta_2(0,\zeta)}{2\zeta+i\alpha }\right)
$$
and the expression for perturbation determinant is
$$
D(\zeta)=\frac{1}{2i\zeta-\alpha}\left(\theta_1'(0,\zeta)\theta_2(0,\zeta)+\theta_1(0,\zeta)\theta_2'(0,\zeta)-\alpha\theta_1(0,\zeta)\theta_2(0,\zeta)\right).
$$
For $\alpha=0$ the $\delta$ conditions are commonly known as Kirchhoff conditions and in this case the above expressions coincide with the results obtained in Theorem (1.1) and Corollary (3.1) of \cite{D1} with $n=2$ (see also \cite{UD}).
\end{example}

\begin{example}
The $\delta'$ conditions are obtained by choosing $\phi=0,\, a=-1,\,b=\beta,\,c=0$ and $d=1$ in \eqref{gencond}. In this case theorem \eqref{t1} implies
$$
\Tr(R_V^{\mathcal A}(\zeta)-R_0^{\mathcal A}(\zeta))=-\frac{1}{2\zeta}\ln \left(\frac{\theta_1'(0,\zeta)\theta_2(0,\zeta)+\theta_1(0,\zeta)\theta_2'(0,\zeta)-\beta\theta'_1(0,\zeta)\theta'_2(0,\zeta)}{2\zeta-i\beta\zeta^2 }\right)
$$

and perturbation determinant is given by
$$
D(\zeta)=\frac{1}{2i\zeta+\beta\zeta^2}\left(\theta_1'(0,\zeta)\theta_2(0,\zeta)+\theta_1(0,\zeta)\theta_2'(0,\zeta)-\beta\theta'_1(0,\zeta)\theta'_2(0,\zeta)\right).
$$
\end{example}

In the next two examples we use the notation $D_x\phi=\phi^{(1)}$ for the generalized derivative in the sense of \cite{kurdis}.
\begin{example}
The Schr\"odinger operator with the singular density $-D_x\left(1+\sigma\delta\right)D_x$  is equivalent to the operator $H_V^{\mathcal A}$ with $\phi=0,\, a=1, \, b=-\sigma\in \mathbb R,\, c=0$ and $d=-1$ and the perturbation determinant is given by
$$
D(\zeta)=\frac{1}{2i\zeta +\sigma\zeta^2 }\left(\theta_1'(0,\zeta)\theta_2(0,\zeta)+\theta_1(0,\zeta)\theta'_2(0,\zeta)-\sigma\theta'_1(0,\zeta)\theta_2'(0,\zeta)\right).
$$
\end{example}

\begin{example}
For real $\sigma_1$ and $\sigma_2$ the two parameters family of Schr\"odinger operators with singular potential $-D_x^2+\sigma_1\delta+\sigma_2\delta^{(1)}+V$ is equivalent to the operator $H_V^{\mathcal A}$ with $\phi=0,\, a=\frac{2+\sigma_2}{2-\sigma_2},\, b=0,\, c=\frac{4\sigma_1}{4-\sigma_2^2},\, d=-\left(\frac{2-\sigma_2}{2+\sigma_2}\right)$ (c.f. Section 3.2.4 of \cite{kurdis}). Theorem \eqref{t1} implies
the following expression for the perturbation determinant
$$
D(\zeta)=\dfrac{\left(\left(\frac{2+\sigma_2}{2-\sigma_2}\right)\theta_1'(0,\zeta)\theta_2(0,\zeta)+\left(\frac{2-\sigma_2}{2+\sigma_2}\right)\theta_1(0,\zeta)\theta_2'(0,\zeta)-\left(\frac{4\sigma_1}{4-\sigma_2^2}\right)\theta_1(0,\zeta)\theta_2(0,\zeta)\right).}{\left(\frac{8+2\sigma_2^2}{4-\sigma_2^2}\right)i\zeta-\left(\frac{4\sigma_1}{4-\sigma_2^2}\right)}
$$
\end{example}
\section{conclusions}
The Schr\"odinger operators on the real-line with generalized point interaction at zero can be imagined as a $2\times 2$ matrix-valued Schr\"odinger operator on semi-axis with general self-adjoint conditions at zero. 
Scattering theory of matrix-valued Schr\"odinger operators
on semi-axis with general self-adjoint conditions at zero has been studied in much detail
in the last few years.  On the other hand, explicit derivation of perturbation determinants and its connections with spectral shift function and Levinson's formulas  has received relatively less attention. The present article is devoted to explicit
derivation of perturbation determinant via trace formula for Schr\"odinger operators and expressing Levinson's formula and spectral shift function in terms of perturbation determinant. Our results complement the previous studies in this regard. We used operator theoretic approach in our derivations similar to Demirel's work which deals
with star graphs with Kirchhoff conditions at the vertex. The
expressions for perturbation determinants are further used to derive Levinsons formulas connecting
the number of negative eigenvalues with phase shift. Phase shift is defined in terms of perturbation
determinant. Looking forward,  it would be interesting to extend and generalize the results on perturbation determinant to
matrix-valued Schr\"odinger operators\ with most general self-adjoint conditions at zero.

\appendix
\section{Estimate on trace norm of resolvent difference}\label{appendix}
Here we'll provide and explicit derivation of the estimate
\begin{align*}
||R_{V}^{\mathcal A}(-t)-R_{0}^{\mathcal A}(-t)||_1\leq C_\epsilon t^{-\frac{3}{2}+\epsilon}
\end{align*}
for all $\epsilon>0$ and large $t$. We will need the following lemma.
\begin{lemma}[\cite{D1}, Lemma 3.4 ]\label{ssflemma1}
Let $\mathcal{H}$ be a Hilbert space and $f, g \in \mathcal{H}$. Assume that $\mathcal{R}=(\cdot,\,f)f-(\cdot,\,g)g$
is an operator of rank two on $\mathcal{H}$. Then, the trace norm of $\mathcal{R}$ is given by
\begin{align}\label{ssf1}
||\mathcal{R}||_1=\sqrt{(||f||^2+||g||^2)^2-4|(f,g)|^2}.
\end{align}
Moreover, if we let $g=f+h$ then,
\begin{align}\label{ssf2}
(||f||^2+||g||^2)^2-4|(f,g)|^2\leq6||h||^2||f||^2+3||h||^4.
\end{align}
\end{lemma}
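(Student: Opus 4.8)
The plan is to reduce everything to the action of the self-adjoint rank-two operator $\mathcal{R}=(\cdot,f)f-(\cdot,g)g$ on the subspace $\mathcal{V}=\mathrm{span}\{f,g\}$, on whose orthogonal complement $\mathcal{R}$ vanishes. Because $\mathcal{R}$ is self-adjoint, its at most two nonzero eigenvalues $\lambda_1,\lambda_2$ satisfy $||\mathcal{R}||_1=|\lambda_1|+|\lambda_2|$, so it is enough to pin down $\lambda_1,\lambda_2$ through the two symmetric functions $\lambda_1+\lambda_2=\Tr\mathcal{R}$ and $\lambda_1^2+\lambda_2^2=\Tr\mathcal{R}^2$. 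First I would compute these by expanding $\mathcal{R}$ and $\mathcal{R}^2$ directly as combinations of the rank-one operators $(\cdot,f)f$, $(\cdot,f)g$, $(\cdot,g)f$, $(\cdot,g)g$ and reading off the traces; this yields $\Tr\mathcal{R}=||f||^2-||g||^2$ and $\Tr\mathcal{R}^2=||f||^4+||g||^4-2|(f,g)|^2$.

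The key structural observation is that $\mathcal{R}$ is a difference of two positive rank-one operators, so on $\mathcal{V}$ it carries one nonnegative and one nonpositive eigenvalue; equivalently, $\lambda_1\lambda_2\le 0$. I would verify this by computing $2\lambda_1\lambda_2=(\lambda_1+\lambda_2)^2-(\lambda_1^2+\lambda_2^2)=2|(f,g)|^2-2||f||^2||g||^2$, which is nonpositive by the Cauchy--Schwarz inequality. With the signs settled, $(|\lambda_1|+|\lambda_2|)^2=\lambda_1^2+\lambda_2^2-2\lambda_1\lambda_2=(||f||^2+||g||^2)^2-4|(f,g)|^2$, and taking the square root gives \eqref{ssf1}.

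For \eqref{ssf2} I would substitute $g=f+h$ and expand the right-hand side of \eqref{ssf1} in terms of $||f||^2$, $||h||^2$ and $(f,h)$. Setting $r=\re(f,h)$, the algebra collapses to $(||f||^2+||g||^2)^2-4|(f,g)|^2=||h||^4+4||f||^2||h||^2+4r||h||^2-4(\im(f,h))^2$. Discarding the nonpositive term $-4(\im(f,h))^2$ and bounding $4r||h||^2$ via $r\le|(f,h)|\le||f||\,||h||$ combined with the elementary estimate $2||f||\,||h||\le||f||^2+||h||^2$ gives $4r||h||^2\le 2||f||^2||h||^2+2||h||^4$, whence the asserted bound $6||h||^2||f||^2+3||h||^4$ follows.

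I expect the only subtle point to be justifying that $\mathcal{R}$ has exactly one eigenvalue of each sign, so that the identification $|\lambda_1\lambda_2|=-\lambda_1\lambda_2$ is legitimate; this rests on the explicit value $\lambda_1\lambda_2=|(f,g)|^2-||f||^2||g||^2\le0$ together with the rank-two self-adjoint structure, and the boundary cases in which $f,g$ are linearly dependent or one of them vanishes are handled separately by verifying the formula directly for a rank-one or zero operator.
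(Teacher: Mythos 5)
Your proof is correct and complete: the computation of $\Tr\mathcal{R}=\|f\|^2-\|g\|^2$ and $\Tr\mathcal{R}^2=\|f\|^4+\|g\|^4-2|(f,g)|^2$, the observation that $\lambda_1\lambda_2=|(f,g)|^2-\|f\|^2\|g\|^2\le 0$ by Cauchy--Schwarz (so that $(|\lambda_1|+|\lambda_2|)^2=\lambda_1^2+\lambda_2^2-2\lambda_1\lambda_2$), and the expansion with $g=f+h$ giving $4\re(f,h)\|h\|^2\le 2\|f\|^2\|h\|^2+2\|h\|^4$ all check out. Note that this paper contains no proof of the lemma at all --- it is quoted from \cite{D1}, Lemma 3.4 --- so there is nothing internal to compare against; your symmetric-function argument on the two nonzero eigenvalues is essentially the standard route taken in that cited reference, and it is self-contained where the paper relies on citation.
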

\begin{lemma}\label{ssflemma2}
Assume that the potential $V$ satisfies condition \eqref{vcond}. Then the resolvents $R_V^{\mathcal A}$ of $H_V^{\mathcal A}$ and $R_0^{\mathcal A}$ of $H_0^{\mathcal A}$ satisfy, for all $\epsilon >0$ and for large $t,$
\begin{align*}
||R_{V}^{\mathcal A}(-t)-R_{0}^{\mathcal A}(-t)||_1\leq C_\epsilon t^{-\frac{3}{2}+\epsilon}.
\end{align*}
\end{lemma}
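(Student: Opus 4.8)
The plan is to estimate the trace norm of $R_V^{\mathcal A}(-t)-R_0^{\mathcal A}(-t)$ by splitting it, via Kre\u in's formula \eqref{krein}, into two contributions: the diagonal decoupled part $R_V^D(-t)-R_0^D(-t)$ and the finite-rank coupling correction built from the Jost solutions. For the diagonal part the operator is a direct sum over the two edges of half-line resolvent differences $R_{e_j}^D(-t)-R_{0,e_j}^D(-t)$, each of which is trace class under \eqref{vcond} and already known (cf. \cite{yaf}, (5.11)) to satisfy a bound of the form $C_\epsilon t^{-3/2+\epsilon}$; I would simply invoke this half-line estimate edge by edge and add.

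The more delicate piece is the rank-type correction. Looking at \eqref{thmeq1} and \eqref{thmeq2}, the difference of the off-diagonal and the coupling-diagonal terms is a sum of operators of the form $(\,\cdot\,,g)g-(\,\cdot\,,f)f$ with $f,g$ built from $\theta_j(x,\zeta)$ and the free exponentials $e^{i\zeta x}$, all evaluated at $\zeta=\sqrt{-t}=i\sqrt t$. This is exactly the setting of Lemma \ref{ssflemma1}. So first I would write each rank-two block as $(\,\cdot\,,f)f-(\,\cdot\,,g)g$, identify $f$ as the free building block $e^{-\sqrt t\,x}/\sqrt{\text{(coupling factor)}}$ and $g$ as its perturbed analogue coming from $\theta_j(x,i\sqrt t)$, and set $g=f+h$. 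Then \eqref{ssf2} reduces the whole estimate to controlling $\|f\|^2$ and $\|h\|^2$ in $L^2([0,\infty))$. The free vector satisfies $\|f\|^2=O(t^{-1/2})$ since $\int_0^\infty e^{-2\sqrt t\,x}\,dx=\tfrac{1}{2\sqrt t}$, together with the coefficient asymptotics in \eqref{kasymp} and the normalization $w_j(i\sqrt t)=1+O(t^{-1/2})$ from \cite{DU}. For the increment $h$ I would use the standard Volterra representation of the Jost solution to get $\theta_j(x,i\sqrt t)-e^{-\sqrt t\,x}=O(t^{-1})$ uniformly, so that $\|h\|^2=O(t^{-3/2+\epsilon})$; feeding these into $\|\mathcal R\|_1\le (6\|h\|^2\|f\|^2+3\|h\|^4)^{1/2}$ yields the term $\|f\|\,\|h\|=O(t^{-1/4})\cdot O(t^{-3/4+\epsilon})=O(t^{-1+\epsilon})$, and after accounting for the $1/L(i\sqrt t)$ prefactors one recovers the desired $t^{-3/2+\epsilon}$ decay.

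The main obstacle I expect is obtaining the sharp rate $t^{-3/2+\epsilon}$ rather than a weaker power: this requires careful bookkeeping of the large-$t$ asymptotics of the coupling denominator $(a-d)i\zeta-(b\zeta^2+c)$, of $L(\zeta)$ via \eqref{kasymp}, and of the Jost-solution increment $h$, so that the products of these asymptotics assemble to exactly the right exponent. The subtlety is that the naive bound $\|\mathcal R\|_1\le\|f\|^2+\|g\|^2=O(t^{-1/2})$ is far too crude; the gain comes entirely from the cancellation exploited in \eqref{ssf2}, so the estimates on $h$ must be quantitative and uniform in $x$, which is where the weighted integrability furnished by the Jost-function theory enters. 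Assembling the diagonal bound and the rank-correction bound then gives the claimed estimate.
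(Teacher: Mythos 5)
Your overall route coincides with the paper's: split the difference via Krein's formula \eqref{krein} into the decoupled part $R_V^D(-t)-R_0^D(-t)$ (handled edge by edge by Yafaev's half-line estimate) plus the finite-rank coupling correction, and control the latter with the rank-two trace-norm Lemma \ref{ssflemma1}, taking $f$ built from the free exponential, $g$ from the Jost solutions, and $h=g-f$ small by Jost-function asymptotics. This is exactly the paper's strategy, so the question is only whether your execution of the finite-rank step closes.

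It does not, as written: there is a genuine gap. With the vectors you specify, $f_k=e^{-\sqrt t\,x_k}/\sqrt{(d-a)\sqrt t+bt-c}$ and $g_k=\theta_k(x_k,i\sqrt t)/\bigl(\sqrt{L(i\sqrt t)}\,\theta_k(0,i\sqrt t)\bigr)$, the operator $(\,\cdot\,,g)g-(\,\cdot\,,f)f$ is \emph{not} equal to the Krein-correction difference. Its kernel has every entry proportional to $1/\bigl(L\,\theta_j(0)\theta_k(0)\bigr)$, whereas by Lemma \ref{lambdavalues} the true diagonal coefficients are $-\bigl(a+b\,\theta_2'(0)/\theta_2(0)\bigr)/\bigl(L\,\theta_1(0)^2\bigr)$ and $-\bigl(b\,\theta_1'(0)/\theta_1(0)-d\bigr)/\bigl(L\,\theta_2(0)^2\bigr)$; these match the symmetric product form only in special cases. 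So after subtracting your rank-two model operator there remains a purely diagonal operator (the paper's $\mathcal{S}_1$, produced there by an explicit add-and-subtract in the diagonal), which is not of the form covered by Lemma \ref{ssflemma1} and must be estimated directly, by integrating kernel bounds such as $|\theta_k(x,i\sqrt t)-e^{-\sqrt t\,x}|\le m_1 t^{-1/2}e^{-\sqrt t\,x}$ from \eqref{theestimates1}. This remainder is not a lower-order nuisance: in the paper's accounting it is the \emph{dominant} contribution, $\|\mathcal{S}_1\|_1=O(t^{-3/2})$, while the rank-two piece you do treat obeys $\|\mathcal{S}_2\|_1=O(t^{-2})$. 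Hence your proposal bounds only the subdominant part of the coupling correction and omits the term that actually saturates the claimed rate. A secondary, fixable issue is the bookkeeping: keeping the coupling denominator, $\|f\|^2=\bigl((d-a)\sqrt t+bt-c\bigr)^{-1}t^{-1/2}$, which is $O(t^{-3/2})$ when $b\neq0$ (and $O(t^{-1})$ when $b=0$, $a\neq d$), not $O(t^{-1/2})$; carried through Lemma \ref{ssflemma1} this gives the $O(t^{-2})$ bound for the rank-two piece without any hand-waving about prefactors. The missing diagonal remainder, however, is a structural omission rather than an arithmetic one.
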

\begin{proof}
Let $\mathcal{R}_1=R^{\mathcal{A}}_{V}(-t)-R^D_{V}(-t)+R^D_{0}(-t)-R^{\mathcal{A}}_{0}(-t)$ and $\mathcal{R}_2=R^D_{V}(-t)-R^D_{0}(-t)$ then,
\begin{align*}
R^{\mathcal{A}}_{V}(-t)-R^{\mathcal{A}}_{0}(-t)=\mathcal{R}_1+\mathcal{R}_2.
\end{align*}
Hence
\begin{align}\label{trnormineq}
||R^{\mathcal{A}}_{V}(-t)-R^{\mathcal{A}}_{0}(-t)||_1\leq||\mathcal{R}_1||_1+||\mathcal{R}_2||_1.
\end{align}
Trace norm on $\mathcal{R}_2$ can be estimated as $||\mathcal{R}_2||_1\leq ct^{-\frac{3}{2}+\epsilon}$ (c.f. Lemma 4.5.6. \cite{yaf}). We only need to find estimate on the trace norm of rank two operator $\mathcal{R}_1$. By adding and subtracting the quantity
\begin{align*}
\dfrac{\theta_j^2(x_j,i\sqrt{t})}{L(i\sqrt{t})\theta_j^2(0,i\sqrt{t})}+\dfrac{e^{-2 x_j\sqrt{t}}}{(d-a)\sqrt{t}+bt-c}\hspace*{1cm}j=1,2
\end{align*}
in the diagonal of $\mathcal{R}_1$ we can re-write it as 
\begin{align*}
\mathcal{R}_1=\mathcal{S}_1+\mathcal{S}_2,
\end{align*} where
\begin{align*}
\mathcal{S}_1=\begin{bmatrix}
\frac{(a-\sqrt{t} b+1)e^{-2 x_1\sqrt{t}}}{(d-a)\sqrt{t}+bt-c}-\frac{\left(a+b\frac{\theta_2'(0,i\sqrt{t})}{\theta_2(0,i\sqrt{t})}+1\right)\theta_1^2(x_1,i\sqrt{t})}{L(i\sqrt{t})\theta_1^2(0,i\sqrt{t})} &0\\
0 &\frac{(1-\sqrt{t} b-d)e^{-2 x_2\sqrt{t}}}{(d-a)\sqrt{t}+bt-c}-\frac{\left(b\frac{\theta_1'(0,i\sqrt{t})}{\theta_1(0,i\sqrt{t})}-d+1\right)\theta_2^2(x_2,i\sqrt{t})}{L(i\sqrt{t})\theta_2^2(0,i\sqrt{t})}
\end{bmatrix}
\end{align*}
and
\begin{align*}
(\mathcal{S}_2)_{j,k}=\begin{bmatrix}
\frac{\theta_j(x_j,i\sqrt{t})\theta_k(x_k,i\sqrt{t})}{L(i\sqrt{t})\theta_j(0,i\sqrt{t})\theta_k(0,i\sqrt{t})}-\frac{e^{-(x_j+x_k)\sqrt{t}}}{(d-a)\sqrt{t}+bt-c}
\end{bmatrix}\hspace*{1cm}j,k=1,2.
\end{align*}
Now, as $\mathcal{S}_1$ is a diagonal matrix and in order to apply definition of trace norm, we can let $\mathcal{S}_1=I\mathcal{S}_1I$, where $I$ is second order identity matrix. This implies,
\begin{align*}
||\mathcal{S}_1||_1=|\sigma_1|+|\sigma_2|
\end{align*}
where
\begin{align*}
|\sigma_1|=\int_0^{\infty}\left|\frac{(a-\sqrt{t} b+1)e^{-2 x_1\sqrt{t}}}{(d-a)\sqrt{t}+bt-c}-\frac{\left(a+b\frac{\theta_2'(0,i\sqrt{t})}{\theta_2(0,i\sqrt{t})}+1\right)\theta_1^2(x_1,i\sqrt{t})}{L(i\sqrt{t})\theta_1^2(0,i\sqrt{t})}\right|dx_1
\end{align*}
and
\begin{align*}
|\sigma_2|=\int_0^{\infty}\left|\frac{(1-\sqrt{t} b-d)e^{-2 x_2\sqrt{t}}}{(d-a)\sqrt{t}+bt-c}-\frac{\left(b\frac{\theta_1'(0,i\sqrt{t})}{\theta_1(0,i\sqrt{t})}-d+1\right)\theta_2^2(x_2,i\sqrt{t})}{L(i\sqrt{t})\theta_2^2(0,i\sqrt{t})}\right|dx_2.
\end{align*}
For $t$ large enough, these expressions can be simplified to
\begin{align*}
|\sigma_1|&\approx\left|\frac{(a-\sqrt{t} b+1)}{(d-a)\sqrt{t}+bt-c}\right|\int_0^{\infty}\left|e^{-2 x_1\sqrt{t}}-\theta_1^2(x_1,i\sqrt{t})\right|dx_1\\
&=\left|\frac{(a-\sqrt{t} b+1)}{(d-a)\sqrt{t}+bt-c}\right|\int_0^{\infty}\left|\theta_1(x_1,i\sqrt{t})-e^{-x_1\sqrt{t}}\right|\left|\theta_1(x_1,i\sqrt{t})+e^{- x_1\sqrt{t}}\right|dx_1
\end{align*}
and
\begin{align*}
|\sigma_2|\approx\left|\frac{(1-\sqrt{t} b-d)}{(d-a)\sqrt{t}+bt-c}\right|\int_0^{\infty}\left|\theta_2(x_2,i\sqrt{t})-e^{-x_2\sqrt{t}}\right|\left|\theta_2(x_2,i\sqrt{t})+e^{- x_2\sqrt{t}}\right|dx_2.
\end{align*}
To find estimates for $|\sigma_1|$ and $|\sigma_2|$ we use the  following estimates for large $t$
\begin{equation}\label{theestimates1}
\left|\theta_k(x_k,i\sqrt{t})-e^{-\sqrt{t}x_k}\right|\leq\dfrac{m_1}{\sqrt{t}}e^{-\sqrt{t}x_k}
\end{equation}
and
\begin{equation}\label{theestimates2}
\left|\theta_k(x_k,i\sqrt{t})+e^{-\sqrt{t}x_k})\right|\leq \left(\frac{m_2}{\sqrt{t}}+2\right)e^{-\sqrt{t}x_k}
\end{equation}
This yields the following bounds
\begin{align*}
|\sigma_1|\leq \left|\frac{(a-\sqrt{t} b+1)}{(d-a)\sqrt{t}+bt-c}\right|\frac{m_1(m_2+2\sqrt{t})}{2t^{3/2}}=|-m_1t^{-3/2}+O(t^{-2})|,\\
|\sigma_2|\leq \left|\frac{(1-\sqrt{t} b-d)}{(d-a)\sqrt{t}+bt-c}\right|\frac{m_1(m_2+2\sqrt{t})}{2t^{3/2}}=|-m_1t^{-3/2}+O(t^{-2})|.
\end{align*}
Therefore\begin{align*}
|\sigma_1|=|\sigma_2|=O(t^{-3/2})
\end{align*}
and hence
\begin{align*}
||\mathcal{S}_1||_1=O(t^{-3/2}).
\end{align*}\\

To find estimate on norm of  $\mathcal{S}_2$, we use Lemma \eqref{ssflemma1}. Clearly, $\mathcal{S}_2=(\cdot,f)f-(\cdot,g)g$ is an operator of rank two, where
$$f_k(x_k)=\dfrac{-e^{-x_k\sqrt{t}}}{\sqrt{(d-a)\sqrt{t}+bt-c}}\quad\text{ and }\quad g_k(x_k)=\dfrac{-\theta_k(x_k,i\sqrt{t})}{\sqrt{L(i\sqrt{t})}\theta_k(0,i\sqrt{t})}.$$\\
Let $h=g-f$, then
\begin{align*}
h_k(x_k)&=\dfrac{e^{-x_k\sqrt{t}}}{\sqrt{(d-a)\sqrt{t}+bt-c}}-\dfrac{\theta_k(x_k,i\sqrt{t})}{\sqrt{L(i\sqrt{t})}\theta_k(0,i\sqrt{t})}\\
&=-\left(\dfrac{\theta_k(x_k,\sqrt{t}i)-e^{-\sqrt{t}x_k}}{\sqrt{L(i\sqrt{t})}\theta_k(0,i\sqrt{t})}+\left(\dfrac{1}{\sqrt{L(i\sqrt{t})}\theta_k(0,i\sqrt{t})}-\dfrac{1}{\sqrt{(d-a)\sqrt{t}+bt-c}}\right)e^{-\sqrt{t}x_k}\right).
\end{align*}
To compute estimate on 
$
||h||^2=\sum\limits_{k=1}^{2}\int\limits_0^{\infty}\left|h_k(x_k)\right|^2dx_k
$,
we use \eqref{kasymp} and
\eqref{theestimates1} and after some simplifications we obtain
\begin{align*}
||h||^2\leq \dfrac {m_1^2}{ \left((d-a)\sqrt{t}+bt-c \right) {t}^{3/2}}.
\end{align*}
Similarly,
\begin{align*}
||f||^2=\sum\limits_{k=1}^{2}\int\limits_0^{\infty}\left|\dfrac{-e^{-\sqrt{t}x_k}}{\sqrt{(d-a)\sqrt{t}+bt-c}} \right|^2dx_k=\dfrac {1}{ \left((d-a)\sqrt{t}+bt-c \right)\sqrt{t}}.
\end{align*}
The estimate on the right hand side of inequality \eqref{ssf2} is now given by
\begin{align*}
6||h||^2||f||^2+3||h||^4&\leq\frac {6m_1^2}{ \left((d-a)\sqrt{t}+bt-c \right)^2t^2}+\frac {3m_1^4}{ \left((d-a)\sqrt{t}+bt-c \right)^2t^3}
\\
&=\frac{3m_1^4+6m_1^2t}{\left((d-a)\sqrt{t}+bt-c \right)^2t^3}\\
&=\frac{6m_1^2}{b^2t^4}+\frac{12m_1^2(a-d)}{b^3t^{\frac{9}{2}}}+O(t^{-5})\\
 &=O(t^{-4}).
\end{align*}
This implies
\begin{align*}
||\mathcal{S}_2||_1=O(t^{-2})
\end{align*}
and hence
\begin{align*}
||\mathcal{R}_1||_1=O(t^{-\frac{3}{2}}).
\end{align*}
 \end{proof}

%



\end{document}